\def\dOi{10(4:14)2014}
\subjclass{F.4.1   Mathematical Logic}
\theoremstyle{plain}
\begin{document}

\title[Beyond Q-Resolution and Prenex Form]{Beyond Q-Resolution and Prenex Form:\\
A Proof System for
Quantified Constraint Satisfaction}

\author[Hubie Chen]{Hubie Chen}	
\address{
Departamento LSI,
Universidad del Pa\'{i}s Vasco,
E-20018 San Sebasti\'{a}n,
Spain
\emph{and}
IKERBASQUE, Basque Foundation for Science,
E-48011 Bilbao,
Spain
}	



\keywords{Q-resolution, proof system, quantified constraint satisfaction}




\begin{abstract}
We consider the quantified constraint satisfaction problem (QCSP)
which is to decide,
given a structure and a first-order sentence (not assumed here to be in
prenex form) built from conjunction and quantification,
whether or not the sentence is true on the structure.
We present a proof system for certifying the falsity of QCSP instances
and develop its basic theory; for instance, we provide an algorithmic
interpretation of its behavior.  Our proof system places 
the established Q-resolution proof system in a broader context,
and also allows us to derive QCSP tractability results.
\end{abstract}

\maketitle

\newtheorem{example}[thm]{Example}

\newtheorem{definition}[thm]{Definition}
\newtheorem{lemma}[thm]{Lemma}
\newtheorem{theorem}[thm]{Theorem}

\newcommand{\ppequiv}{\mathsf{PPEQ}}
\newcommand{\eq}{\mathsf{EQ}}
\newcommand{\iso}{\mathsf{ISO}}
\newcommand{\ppeq}{\ppequiv}
\newcommand{\ppiso}{\mathsf{PPISO}}
\newcommand{\boolppiso}{\mathsf{BOOL}\mbox{-}\mathsf{PPISO}}
\newcommand{\csp}{\mathsf{CSP}}
\newcommand{\gi}{\mathsf{GI}}
\newcommand{\ci}{\mathsf{CI}}

\newcommand{\rela}{\mathbf{A}}
\newcommand{\relb}{\mathbf{B}}
\newcommand{\relc}{\mathbf{C}}
\newcommand{\rels}{\mathbf{S}}
\newcommand{\relt}{\mathbf{T}}

\newcommand{\alga}{\mathbb{A}}
\newcommand{\algb}{\mathbb{B}}
\newcommand{\algab}{\mathbb{A}_{\relb}}

\newcommand{\idemp}{I}

\newcommand{\nats}{\mathbb{N}}

\newcommand{\varv}{\mathcal{V}}
\newcommand{\variety}{\mathcal{V}}
\newcommand{\false}{\mathsf{false}}
\newcommand{\true}{\mathsf{true}}
\newcommand{\pol}{\mathsf{Pol}}
\newcommand{\inv}{\mathsf{Inv}}
\newcommand{\alg}{\mathsf{Alg}}
\newcommand{\pitwo}{\Pi_2^p}
\newcommand{\sigmatwo}{\Sigma_2^p}
\newcommand{\pithree}{\Pi_3^p}
\newcommand{\sigmathree}{\Sigma_3^p}

\newcommand{\fancya}{\mathcal{A}}
\newcommand{\fancyc}{\mathcal{C}}
\newcommand{\fancyg}{\mathcal{G}}
\newcommand{\fancym}{\mathcal{M}}
\newcommand{\tree}{\mathcal{T}}

\newcommand{\tw}{\mathsf{tw}}

\newcommand{\qc}{\mathsf{QC\mbox{-}MC}}
\newcommand{\rqc}{\mathsf{RQC\mbox{-}MC}}

\newcommand{\qcfo}{\mathrm{QCFO}}
\newcommand{\qcfofk}{\qcfo_{\forall}^k}
\newcommand{\qcfoek}{\qcfo_{\exists}^k}

\newcommand{\fo}{\mathrm{FO}}

\newcommand{\tup}[1]{\overline{#1}}

\newcommand{\nn}{\mathsf{nn}}
\newcommand{\bush}{\mathsf{bush}}
\newcommand{\width}{\mathsf{width}}

\newcommand{\un}{N^{\forall}}
\newcommand{\en}{N^{\exists}}

\newcommand{\ord}{\tup{u}}
\newcommand{\ordp}[1]{\tup{#1}}

\newcommand{\gc}{G^{-C}}

\newcommand{\HOM}{\mathsf{HOM}}
\newcommand{\HOMP}[1]{#1\mbox{-}\HOM}

\newcommand{\free}{\mathsf{free}}

\newcommand{\todo}[1]{{\bf To do: } #1}

\newcommand{\core}{\mathsf{core}}

\newcommand{\arrow}{\rightarrow}
\newcommand{\arrowp}[1]{\stackrel{#1}{\arrow}}
\newcommand{\arrowk}{\arrowp{k}}

\renewcommand{\S}{\mathcal{S}}
\newcommand{\ar}{\mathsf{ar}}

\newcommand{\id}{\mathsf{id}}

\newcommand{\G}{G}

\newcommand{\hcomment}[1]{{\bf Hubie comment: } #1 {\bf End}}

\newcommand{\res}{\upharpoonright}

\newcommand{\vars}{\mathsf{vars}}


\section{Introduction}

\paragraph{\bf Background.}
The study of \emph{propositional proof systems}
 for 
certifying the unsatisfiability of quantifier-free propositional
formulas
is supported by multiple motivations~\cite{BeamePitassi98-survey,Segerlind07-survey}.  
First, the desire to have an efficiently verifiable certificate 
of a formula's unsatisfiability is a natural and basic one,
and indeed the field of propositional proof complexity
studies, for various proof systems, whether and when
succinct proofs exist for unsatisfiable formulas.
Next, theorem provers are typically based on such proof
systems,
and so insight into the behavior of proof systems
can yield insight into the behavior of theorem provers.
Also, algorithms that perform search to determine the satisfiability
of formulas can typically be shown to implicitly generate proofs in a proof
system,
and thus lower bounds on proof size translate to lower bounds
on the running time of such algorithms.
Finally, 
algorithms that check for unsatisfiability proofs of various
restricted forms have been shown to yield
tractable cases of the propositional satisfiability problem
and related problems
(see for example~\cite{AtseriasKolaitisVardi04-propagation-as-proof-system,BartoKozik09-boundedwidth}).

In recent years, increasing attention has been directed towards
the study of \emph{quantified proof systems} that certify the falsity of
quantified propositional formulas, which study is also pursued
with the motivations similar to those outlined for the quantifier-free case.
Indeed, the development of so-called \emph{QBF solvers}, which
determine the truth of quantified propositional formulas,
has become an active research theme, and the study of
quantified proof systems is pursued
as a way to understand their behavior, as well as to
explore the space of potential certificate formats 
for verifying their correctness on particular input 
instances~\cite{NarizzanoPeschieraPulinaTacchella09-qbfs}.

\emph{Q-resolution}~\cite{BuningKarpinskiFlogel95-resolution}
is a quantified proof system that can be viewed as 
a quantified analog of \emph{resolution},
one of the best-known and most customarily considered propositional proof
systems.
In the context of quantified propositional logic,
Q-resolution is a heavily studied and basic proof system
on which others are built and to which others are
routinely compared, as well as a point of departure for
the discussion of suitable certificate formats
(see~\cite{GiunchigliaNarizzanoTacchella06-clauseterm,SlivovskySzeider12-pathdependencies,JanotaMarquesSilva13-expansions}
for examples).

However, the Q-resolution proof system has intrinsic shortcomings.
First, it is only applicable to quantified propositional sentences that
are in prenex form, that is, where all quantifiers appear in front.
While it is certainly true that an arbitrary 
given quantified propositional formula may be
efficiently prenexed, the process of prenexing is not canonical:
intuitively, it involves choosing a total order of variables consistent with 
the partial order given by the input formula.
As argued by Egly, Seidl, and Woltran~\cite{EglySeidlWoltran09-solver-nnf},
this may disrupt the original formula structure,
``artificially extend'' the scopes of quantifiers,
and generate dependencies among variables that were not originally
present, unnecessarily increasing the expense of solving; 
we refer the reader to their article for a contemporary discussion of
this issue.\footnote{Let us remark that 
using so-called \emph{dependency schemes} is a potential way
to cope with such introduced dependencies 
in a prenex formula~\cite{SlivovskySzeider12-pathdependencies}.}
A second shortcoming of Q-resolution is that
it is only defined in the propositional setting,
despite that some scenarios may be more naturally and cleanly modelled
by allowing variables to be quantified over domains of size greater
than two.

\paragraph{\bf Contributions.}
In this article, we introduce a proof system 
that 
directly overcomes both of the identified shortcomings of Q-resolution
and
that, in a sense made precise,
 generalizes Q-resolution.

We here define 
the \emph{quantified constraint satisfaction problem (QCSP)}
to be the problem of deciding,
given a relational structure $\relb$ and a first-order sentence $\phi$
(not necessarily in prenex form)
built from the
conjunction connective ($\wedge$)  and the two quantifiers ($\forall$,
$\exists$),
whether or not the sentence is true on the structure.
To permit different variables to have different domains, we
formalize the QCSP using multi-sorted first-order logic.

Our proof system (Section~\ref{sect:qcsp-proof-system})
allows for 
the certification of falsity of QCSP instances.
While Q-resolution provides rules for deriving clauses
from a given quantified propositional formula,
our proof system provides rules for deriving what we call
\emph{constraints}
at various formula locations of a given QCSP instance;
here, a constraint $(V, F)$ is a set $V$ of variables paired with 
a set $F$ of assignments, each defined on $V$.
A formula location $i$ paired with a constraint 
is called a \emph{judgement};
a proof in our system is a sequence of judgements
where each is derived from the previous ones via the rules.

Crucially, we formulate and prove
a key lemma (Lemma~\ref{lemma:judgement-gives-formula})
that shows (essentially) that if a judgement $(i, V, F)$ is derivable
from a QCSP instance $(\phi, \relb)$,
then there exists a formula $\psi(V)$ that ``defines''
the constraint $(V, F)$ over $\relb$,
such that $\psi(V)$ can be conjoined to the input sentence $\phi$ at
location $i$ while preserving logical equivalence.
This key lemma is then swiftly deployed to establish
soundness and completeness of our proof system
(Theorem~\ref{thm:sound-and-complete}).
We view the formulation of our proof system and of this key lemma
as conceptual contributions.
They offer a broader, deeper,
and more general perspective on Q-resolution and what it means
for a clause to be derivable by Q-resolution:
we show (in a sense made precise) 
that each clause derivable by Q-resolution is derivable
by our proof system (see Theorem~\ref{thm:simulation-of-qres}). 
This yields a clear and transparent proof of the soundness
of Q-resolution which, interestingly, 
is carried out in the setting of first-order logic, 
despite the result concerning propositional logic.

In order to relate our proof system to Q-resolution,
we give a proof system for certain quantified propositional formulas
(Section~\ref{sect:prop-proof-system})
and prove that this second proof system is a faithful propositional
interpretation of our QCSP proof system
(Theorem~\ref{thm:relationship}).
We also provide an algorithmic interpretation of this second proof
system.
In particular, we give a nondeterministic search algorithm 
such that traces of execution that result in certifying falsity
correspond to refutations in the proof system
(Section~\ref{subsect:algorithm}).
As a consequence, the proof system yields 
a basis for establishing running-time lower bounds for 
any deterministic algorithm which instantiates
the non-deterministic choices of our search algorithm.

In the final section of the article (Section~\ref{sect:consistency}),
we present and study a notion of \emph{consistency} for the QCSP
that is naturally induced by our proof system.
In the context of constraint satisfaction,
a consistency notion is a condition
which is necessary for the satisfiability of an instance
and which can typically be efficiently checked.
An example used in practice is \emph{arc consistency},
and understanding when various forms of
consistency provide an exact characterization of satisfiability
(that is, when consistency is sufficient for satisfiability in
addition to being necessary)
has been a central theme in the tractability theory
of constraint
satisfaction~\cite{AtseriasKolaitisVardi04-propagation-as-proof-system,BartoKozik09-boundedwidth,ChenDalmauGrussien13-ACandfriends}.
Atserias, Kolaitis, and Vardi~\cite{AtseriasKolaitisVardi04-propagation-as-proof-system}
showed that checking for
$k$-consistency,
an oft-considered consistency notion,
can be viewed as detecting the absence of a proof of unsatisfiability
having width at most $k$, in a natural proof system
(the width of a proof is the maximum number of variables appearing in
a line of the proof);
Kolaitis and Vardi~\cite{KolaitisVardi00-gametheoretic} 
also characterized $k$-consistency
algebraically as whether or not \emph{Duplicator} can win
a natural \emph{Spoiler-Duplicator pebble game} 
in the spirit of 
Ehrenfeucht-Fra\"{i}ss\'{e} games.

Inspired by these connections, 
we directly define 
a QCSP instance to be \emph{$k$-judge-consistent} 
if it has no unsatisfiability proof (in our proof system)
of width at most $k$; and,
we then present an algebraic, 
Ehrenfeucht-Fra\"{i}ss\'{e}-style characterization
of $k$-judge-consistency (Theorem~\ref{thm:consistency-characterization}).
As an application of this algebraic characterization,
we prove that (in a sense made precise) 
any case of the QCSP that lies in the tractable regime
of a recent dichotomy theorem~\cite{ChenDalmau12-decomposingquantified},
is tractable via checking for $k$-judge-consistency.\footnote{
  Let us remark that this dichotomy theorem has since been generalized~\cite{Chen14-frontier}.
}
That is, within the framework considered by that dichotomy,
if a class of QCSP instances is tractable at all, it is tractable via
$k$-judge consistency.
We remark that earlier work~\cite{ChenDalmau05-pebblegames} presents algebraically
a notion of consistency for the QCSP,
but no corresponding proof system was explicitly presented;
the notion of $k$-judgement consistency 
can be straightforwardly verified to imply
the notion of consistency in this earlier article.

To sum up, this article presents
a proof system for non-prenex quantified formulas.
Our proof system is based on highly natural and simple rules,
and its utility is witnessed by its connections to Q-resolution
and by our presentation of a consistency notion that it induces,
which allows for the establishment of tractability results.
We hope that this proof system will serve as a point of reference and foundation for
the future study of solvers and certificates for non-prenex formulas.
One particular possibility for future work is to compare
this proof system to others that are defined on non-prenex formulas,
such as those discussed and studied by Egly~\cite{Egly12-sequent}.

\section{Preliminaries}

When $f$ is a mapping, we use $f \res S$ to denote its restriction to
a set $S$; 
we use $f[s \to b]$ to denote the extension of $f$ that
maps $s$ to $b$.
We extend this notation to sets of mappings in the natural fashion.

\paragraph{\bf First-order logic.}
We assume basic familiarity with the syntax and semantics of
first-order logic.
For the sake of broad applicability, in this article,
we work with multi-sorted relational first-order logic,
formalized here as follows.
A \emph{signature} is a pair $(\sigma, \S)$
where $\S$ is a set of \emph{sorts} and $\sigma$ is a set of relation
symbols;
each relation symbol $R \in \sigma$ has an associated arity $\ar(R)$
which is an element of $\S^*$.
Each variable $v$ has an associated sort $s(v) \in \S$;
an \emph{atom} is a formula
$R(v_1, \ldots, v_k)$ where $R \in \sigma$
and $s(v_1) \ldots s(v_k) = \ar(R)$.
A \emph{structure} $\relb$ on signature $(\sigma, \S)$
consists of an $\S$-indexed family $B = \{ B_s ~|~ s \in \S \}$
of sets, called the \emph{universe} of $\relb$,
and, for each symbol $R \in \sigma$,
an interpretation $R^{\relb} \subseteq B_{\ar(R)}$.
Here, for a word $w = w_1 \ldots w_k \in \S^*$,
we use $B_w$ to denote the product $B_{w_1} \times \cdots \times
B_{w_k}$.
Suppose that $V$ is a set of variables where each variable $v \in V$
has an associated sort $s(v)$; 
by a mapping $f: V \to B$, we mean a mapping that sends each
$v \in V$ to an element $f(v) \in B_{s(v)}$.

When $\phi$ is a formula, we use $\free(\phi)$ to denote the set
containing the free variables of $\phi$.
The \emph{width} of a formula $\phi$ is the maximum of $|\free(\psi)|$
over all subformulas $\psi$ of $\phi$.
A \emph{quantified-conjunctive formula} (for short, \emph{qc-formula})
is a formula over a signature built from atoms on the signature,
conjunction ($\wedge$), and the two quantifiers ($\forall$, $\exists$).
Note that we permit conjunction of arbitrary arity.
As expected, a \emph{qc-sentence} is a qc-formula $\phi$
such that $\free(\phi) = \emptyset$.
We allow conjunction of arbitrary (finite) arity, in formulas.
We will use $\top$ to denote a sentence that is always true;
this is considered to be a qc-sentence.
A relation $P$ is \emph{qc-definable} over a structure $\relb$
if there exists a qc-formula
$\phi(v_1, \ldots, v_k)$ such that
$P \subseteq B_{s(v_1) \ldots s(v_k)}$ and 
$P$ contains a tuple $(b_1, \ldots, b_k)$
if and only if $\relb, b_1, \ldots, b_k \models \phi(v_1, \ldots, v_k)$.
Note that by the notation
$\relb, b_1, \ldots, b_k \models \phi(v_1, \ldots, v_k)$,
we mean that the structure $\relb$ and the
mapping taking each $v_i$ to $b_i$ satisfy $\phi$.
We sometimes use $\equiv$ to indicate logical equivalence of two
formulas.

We define the \emph{QCSP} to be the problem of deciding,
given a \emph{QCSP instance}, 
which is a pair $(\phi, \relb)$ where $\phi$ is a qc-sentence
and $\relb$ is a structure that both have the same signature,
whether or not $\relb \models \phi$.

\section{QCSP proof system}
\label{sect:qcsp-proof-system}

In this section, we present our proof system for the QCSP,
and establish some basic properties thereof, including soundness and completeness.
Let $(\phi, \relb)$ be a QCSP instance, and conceive of $\phi$ as a tree.
The proof system will allow us to derive what we call
\emph{constraints}
at the various nodes of the tree.
To facilitate the discussion,
we will assume that each qc-sentence $\phi$ 
has, associated with it, a set $I_{\phi}$ of \emph{indices}
that contains one index for each subformula occurrence of $\phi$,
that is, for each node of the tree corresponding for $\phi$.
Let us remark that (in general) the
collection of constraints derivable
at an occurrence of a subformula does not depend only on the
subformula,
but also on the subformula's location in the full formula $\phi$.
When $i$ is an index, we use $\phi(i)$ to denote the 
actual subformula of the subformula occurrence corresponding to $i$;
we will also refer to $i$ as a \emph{location}.

\begin{figure*}[t]
\label{fig:formula}
\centering
\includegraphics[width=150pt]{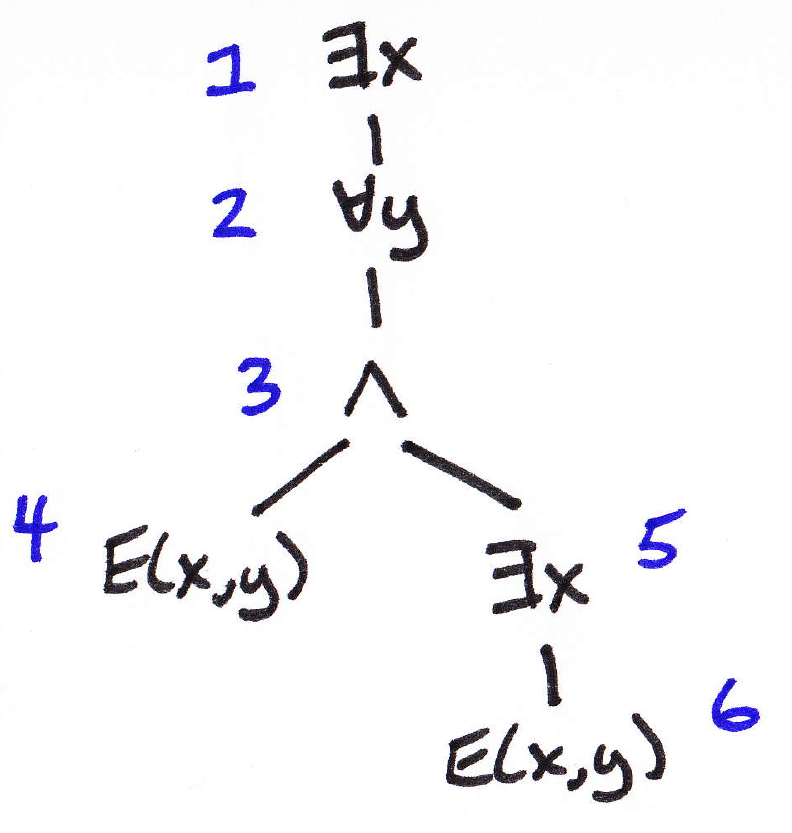}
\caption{Formula discussed in Examples~\ref{ex:running}
and~\ref{ex:proof}.}
\end{figure*}

\begin{example} \label{ex:running}
\normalfont
Consider the qc-sentence 
$\phi = \exists x \forall y (E(x,y) \wedge (\exists x E(x,y)))$.
(See Figure~\ref{fig:formula}.)
When viewed as a tree, this formula has $6$ nodes.
We may index them naturally according to the depth-first search order:
we could take the index set $\{ 1, \ldots, 6 \}$
where $\phi(4) = \phi(6) = E(x,y)$,
$\phi(5) = \exists x \phi(6)$,
$\phi(3) = \phi(4) \wedge \phi(5)$,
$\phi(2) = \forall y \phi(3)$,
and
$\phi(1) = \exists x \phi(2)$.
\qed \end{example}

We say that an index $i$ is a \emph{parent} of an index $j$,
and also that $j$ is a \emph{child} of $i$,
if, in viewing the formula $\phi$ as a tree, the root of the
subformula occurrence of $i$ is the parent of the
root of the subformula occurrence of $j$.
Note that, when this holds, the formula $\phi(i)$
either is of the form 
$Q v \phi(j)$ where $Q$ is a quantifier and $v$ is a variable,
or is a conjunction where $\phi(j)$ appears as a conjunct.
As examples,
with respect to the qc-sentence and indexing in
Example~\ref{ex:running},
index $3$ has two children, namely, $4$ and $5$,
and index $3$ has one parent, namely, $2$.

\begin{definition}
\normalfont
Let $(\phi, \relb)$ be a QCSP instance.
A \emph{constraint} (on $(\phi, \relb)$) is a pair $(V, F)$
where $V$ is a set of variables occurring in $\phi$,
and $F$ is a set of mappings from $V$ to $B$.
A \emph{judgement}
(on $(\phi, \relb)$) is a triple
$(i, V, F)$ where $i \in I_{\phi}$ and
$(V, F)$ is a constraint with $V \subseteq \free(\phi(i))$;
it is \emph{empty} if $F = \emptyset$.
\end{definition}

Here, we use the convention that 
(relative to a QCSP instance)
there is exactly one map
$e: \emptyset \to B$ defined on the empty set,
so there are two constraints
whose variable set is the empty set:
the constraint $(\emptyset, \emptyset)$, 
and the constraint $(\emptyset, \{ e \})$
where $e$ is the aforementioned map.

When $(U_1, F_1)$, $(U_2, F_2)$ are two constraints
on the same QCSP instance, we define
the \emph{join} of $F_1$ and $F_2$,
denoted by $F_1 \Join F_2$, to be
the set
\begin{center}
$\{ f: U_1 \cup U_2 \to B ~|~ 
(f \res U_1) \in F_1, (f \res U_2) \in F_2 \}.$
\end{center}
When $(U, F)$ is a constraint and $y$ is a variable in $U$,
we use $\epsilon_y F$ to denote the set
\begin{center}
$\{ f: U \setminus \{ y \} \to B ~|~ 
\textup{for each $b \in B_{s(y)}$, it holds that $f[y \to b] \in F$}
\}.$
\end{center}
The operator $\epsilon_y$ will be used to eliminate
a universally quantified variable $y$.
Dually, in the following definition, \emph{projection}
can be used to cope with existential quantification.

\renewcommand{\arraystretch}{0.1}
\newcommand{\tinyskip}{\vspace{1pt}}

\begin{definition} \label{proof-system}
\normalfont
A \emph{judgement proof}
on $(\phi, \relb)$
is a finite sequence of judgements,
each of which has one of the following types:

\tinyskip

\begin{tabular}{lp{8.5cm}}

(atom) & $(i, \{v_1, \ldots, v_k\}, F)$
\tinyskip

 where $\phi(i)$ is an atom $R(v_1, \ldots, v_k)$ and

$F = \{ f: \{ v_1, \ldots, v_k \} \to B ~|~ (f(v_1), \ldots, f(v_k))
\in R^{\relb} \}$

\\

(projection) &  $(i, U, F \res U)$
\tinyskip

where $(i, V, F)$
is a previous judgement,
and $U \subseteq V$

\\

(join) & $(i, U_1 \cup U_2, F_1 \Join F_2)$
\tinyskip

where $(i, U_1, F_1)$ and $(i, U_2, F_2)$ are previous judgements

\\

(upward flow) & $(i, V, F)$
\tinyskip

where $(j, V, F)$
is a previous judgement and

$i$ is the parent of $j$

\\


($\forall$-elimination) & $(i, V \setminus \{ y \}, \epsilon_y F)$
\tinyskip

where $(j, V, F)$ is a previous judgement with $y \in V$,

$\phi(i) = \forall y \phi(j)$, and $i$ is the parent of $j$

\\

(downward flow) & $(j, V, F)$
\tinyskip

where $(i, V, F)$ is a previous judgement
and 

$i$ is the parent of $j$

\\

\end{tabular}

We say that a judgement $(i, V, F)$ is \emph{derivable} if there exists
a judgement proof that contains the judgement.

The \emph{width} of a judgement $(i, V, F)$ is $|V|$.
The \emph{width} of a judgement proof is the maximum width
over all of its judgements, and the \emph{length} of a judgement proof
is the number of judgements that it contains.
\qed \end{definition}

Let us emphasize that, by definition, a judgement proof
is a finite sequence of judgements, and
by definition, in order for a triple $(i, V, F)$
to be a judgement, it must hold that all variables in $V$
are free variables of $\phi(i)$.  
Consequently, \emph{upward flow} can 
only be applied to a judgement $(j, V, F)$
if all variables in $V$ are free variables of $\phi(i)$,
where $i$ is the parent of $j$; 
an analogous comment holds for \emph{downward flow}.

\begin{example}
\label{ex:proof}
\normalfont
Let $\phi$ be the qc-sentence from Example~\ref{ex:running}
(shown in Figure~\ref{fig:formula}),
considered as a sentence over signature
$(\{ E \}, \{ e, u \})$ with $\ar(E) = eu$ and
where $s(x) = e$ and $s(y) = u$.
Define $\relb$ to be a structure over this signature
having universe $B$ defined by
$B_e = \{ a, b, c \}$ and $B_u = \{ d, e, f \}$,
and where
$E^{\relb} = \{ (a, d), (a, e), (a, f), (b, e), (c, f) \}$.
To offer a feel of the proof system,
we give some examples of derivable judgements.

Let $F_E$ be the set of assignments from $\{ x, y \}$ to $B$
that satisfy $E(x,y)$ (over $\relb$).
By (atom), we may derive the judgement
$(4, \{ x, y \}, F_E)$.
By (upward flow), we may then derive the judgement
$(3, \{ x, y \}, F_E)$.
By ($\forall$-elimination), we may then derive the judgement
$(2, \{ x \}, G)$,
where $G$ contains the single map that takes $x$ to $a$.
By applying (downward flow) twice, we may then derive the judgement
$(4, \{ x \}, G)$.
By (atom), we may also derive the judgement
$(6, \{ x, y \}, F_E)$.
By (projection), we may then derive the judgement
$(6, \{ x \}, H)$, where $H$ contains the maps taking
$x$ to $a$, $b$, and $c$, respectively.
Let us remark that, even though $\phi(4) = \phi(6)$
and we derived the judgement $(4, \{ x \}, G)$,
it is not possible to derive the judgement 
$(6, \{ x \}, G)$.
(This can be verified 
by Lemma~\ref{lemma:judgement-gives-formula}, to be presented next,
and the observation that $\relb \models \phi$.)
\qed \end{example}

\newcommand{\towedge}{\stackrel{\wedge}{\to}}

We now prove soundness and completeness of our proof system;
we first establish a lemma, which indicates what it means
for a judgement to be derivable.

When $\phi$ is a qc-formula with index set $I$,
and $\{ \theta_i \}_{i \in I}$ is a family of formulas,
we use $\phi^{+\theta}$ to denote the formula obtained
from $\phi$ by replacing, at each location $i$, 
the subformula $\phi(i)$ by $\phi(i) \wedge \theta_i$.
Formally, we define $\phi^{+\theta}$ by induction.
When $\phi(i)$ is an atom, we define
$\phi^{+\theta}(i) = \phi(i) \wedge \theta_i$.
When $\phi(i) = \phi(j) \wedge \phi(k)$,
we define
$\phi^{+\theta}(i) = \phi^{+\theta}(j) \wedge \phi^{+\theta}(k) \wedge
\theta_i$.
When $\phi(i) = Qv \phi(j)$, 
we define
$\phi^{+\theta}(i) = (Q v \phi^{+\theta}(j)) \wedge \theta_i$.
We define $\phi^{+\theta}$ to be $\phi^{+\theta}(r)$ where $r$ is the
root index of $\phi$ (that is, where $r$ is such that $\phi(r) = \phi$).

\begin{lemma}
\label{lemma:judgement-gives-formula}
Let $(\phi, \relb)$ be a QCSP instance.
For every derivable judgement $(i, V, F)$,
there exists a qc-formula $\psi$ such that 
\begin{itemize}

\item $\free(\psi) = V$;

\item for each $f: V \to B$, it holds that
$f \in F$ if and only if $\relb, f \models \psi$; and

\item for any family $\{ \theta_i \}_{i \in I}$ of formulas,
it holds that
$\phi^{+\theta}$ entails $\phi^{+\theta}[i \towedge \psi]$
(and hence that 
$\phi^{+\theta} \equiv \phi^{+\theta}[i \towedge \psi]$).
%
%
Here, $\phi^{+\theta}[i \towedge \psi]$ denotes the formula
where, at location $i$, 
the subformula 
$\phi^{+\theta}(i)$
is replaced with 
$\phi^{+\theta}(i) \wedge \psi$.
\end{itemize}
Moreover, if one has a judgement proof of width at most $k$,
then each of the formulas $\psi$ produced for its judgements
has $\width(\psi) \leq k$.
\end{lemma}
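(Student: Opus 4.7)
The plan is to proceed by induction on the length of the judgement proof, performing a case analysis on the rule that produces the judgement $(i,V,F)$. For each of the six rules of Definition~\ref{proof-system}, I will exhibit a witnessing qc-formula $\psi$ built from the formulas supplied by the inductive hypothesis, and then verify the three bulleted properties together with the width bound.

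The constructions are the natural ones: for the (atom) base case, take $\psi = \phi(i) = R(v_1,\ldots,v_k)$; for (projection) from $(i,V,F)$ to $(i,U,F \res U)$, existentially quantify the eliminated variables, setting $\psi' = \exists \tup{w}\, \psi$ with $\tup{w}$ enumerating $V \setminus U$; for (join), take $\psi = \psi_1 \wedge \psi_2$; for (upward flow) and (downward flow), simply reuse the $\psi$ given by the inductive hypothesis; and for ($\forall$-elimination), where $\phi(i) = \forall y\, \phi(j)$, take $\psi' = \forall y\, \psi$. The first two bullets (agreement of free variables and of satisfying assignments over $\relb$) then follow routinely in each case from the semantics of the constructed formulas combined with the IH; in the join case, one uses that an assignment satisfies $\psi_1 \wedge \psi_2$ iff its restrictions satisfy $\psi_1$ and $\psi_2$, which is precisely the definition of $F_1 \Join F_2$.

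The main obstacle is the third bullet for the rules (upward flow), (downward flow), and ($\forall$-elimination), where the location at which $\psi$ is attached shifts across a quantifier in the formula tree. The key observation is that the judgement side-condition $V \subseteq \free(\phi(i))$ guarantees that any variable bound at the relevant position does not occur free in $\psi$. This licenses the standard equivalences $\forall v(A \wedge B) \equiv (\forall v\, A) \wedge B$ and $\exists v(A \wedge B) \equiv (\exists v\, A) \wedge B$, valid whenever $v \notin \free(B)$, which let me convert the IH entailment available at one location into the target entailment at an adjacent location. Concretely, for upward flow from $j$ to $i$: if $\phi(i)$ is a conjunction, then $\phi^{+\theta}[j \towedge \psi]$ and $\phi^{+\theta}[i \towedge \psi]$ coincide up to reordering of conjuncts at $i$; and if $\phi(i) = Q v\, \phi(j)$, then $V \subseteq \free(\phi(i))$ forces $v \notin V = \free(\psi)$, so the equivalence above moves $\psi$ past the $Q v$. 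The downward flow case is symmetric (now using $V \subseteq \free(\phi(j))$), and ($\forall$-elimination) additionally uses $\forall y(A \wedge B) \equiv (\forall y\, A) \wedge (\forall y\, B)$ to distribute the attached $\psi$ across the universal. The (projection) and (join) cases reduce to simple logical weakening and conjunction combined with the IH.

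For the width bound, quantification never enlarges $\width$, since it decreases the free variable count at the outer level while leaving inner subformulas unchanged; and in the (join) case, $|\free(\psi_1 \wedge \psi_2)| = |U_1 \cup U_2|$, which is by definition the width of the new judgement and so at most $k$. Hence at each step the width of the constructed $\psi$ is bounded by the widths of the participating IH witnesses, and an easy induction yields $\width(\psi) \leq k$ for every judgement appearing in a proof of width at most $k$.
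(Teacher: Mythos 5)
Your constructions case by case (atom, projection via existential quantification, join via conjunction, reuse of $\psi$ for the flow rules, $\forall y\,\psi$ for $\forall$-elimination) coincide with the paper's, and your handling of the flow and $\forall$-elimination cases — using $V \subseteq \free(\phi(i))$ to move $\psi$ past the quantifier, and distributivity of $\forall$ over $\wedge$ — is exactly the intended argument. However, there is a genuine gap in the third bullet for (join), which you dismiss as ``simple logical weakening and conjunction combined with the IH.'' The naive combination does not work: from $\phi^{+\theta}$ entails $\phi^{+\theta}[i \towedge \psi_1]$ and $\phi^{+\theta}$ entails $\phi^{+\theta}[i \towedge \psi_2]$ you only get entailment of the conjunction $\phi^{+\theta}[i \towedge \psi_1] \wedge \phi^{+\theta}[i \towedge \psi_2]$, and when the location $i$ lies below an existential quantifier this does not entail $\phi^{+\theta}[i \towedge (\psi_1 \wedge \psi_2)]$: for instance $\exists x(\beta \wedge \psi_1) \wedge \exists x(\beta \wedge \psi_2)$ does not in general entail $\exists x(\beta \wedge \psi_1 \wedge \psi_2)$, since the two existential witnesses may differ.

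This is precisely why the lemma is stated with an arbitrary family $\{\theta_i\}_{i \in I}$ — a strengthening of the induction hypothesis that your write-up carries along but never uses. The paper's join case exploits it by re-parameterizing: fix $\{\theta_j\}$, apply the IH for $\psi_1$ with this family, and then apply the IH for $\psi_2$ with the modified family $\{\theta'_j\}$ where $\theta'_i = \theta_i \wedge \psi_1$ and $\theta'_j = \theta_j$ elsewhere. Since $\phi^{+\theta}[i \towedge \psi_1] \equiv \phi^{+\theta'}$ and $\phi^{+\theta'}[i \towedge \psi_2] \equiv \phi^{+\theta}[i \towedge (\psi_1 \wedge \psi_2)]$, transitivity of entailment then yields the claim. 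Your proof needs this chaining step (or an equivalent device) spelled out; as written, the join case — the one place where the $\theta$-parameterization earns its keep — would fail.
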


\begin{proof}
We consider the different types of judgements, and use the notation
from Definition~\ref{proof-system}.
In each case, the claim on the width is straightforwardly verified.

In the case of (atom), we take $\psi = \phi(i)$;
the formulas
$\phi^{+\theta}$ and
$\phi^{+\theta}[i \towedge \psi]$
are logically equivalent since $\phi(i) \equiv \phi(i) \wedge
\phi(i)$.

In the case of (projection), 
by induction, we have that 
$\phi^{+\theta}$ 
entails
$\phi^{+\theta}[i \towedge \psi']$ 
where $\psi'$ is the formula
for the judgement 
$(i, V, F)$.
We take $\psi = \exists v_1 \ldots \exists v_m \psi'$,
where $v_1, \ldots, v_m$ is a listing of the elements in $V \setminus
U$.  We have that 
$\phi^{+\theta}[i \towedge \psi']$ 
entails 
$\phi^{+\theta}[i \towedge \psi]$,
and hence by transitivity of the entailment relation that
$\phi^{+\theta}$
entails 
$\phi^{+\theta}[i \towedge \psi]$.

In the case of (join),
by induction, we have 
(for any family $\{ \theta_j \}_{j \in I}$)
that 
$\phi^{+\theta}$ 
entails both
$\phi^{+\theta}[i \towedge \psi_1]$ 
and
$\phi^{+\theta}[i \towedge \psi_2]$ 
where $\psi_1$ and $\psi_2$ are the formulas corresponding to 
the judgements $(i, U_1, F_1)$ and $(i, U_2, F_2)$.
We take $\psi = \psi_1 \wedge \psi_2$.
Fix a family $\{ \theta_j \}_{j \in I}$, and
define $\{ \theta'_j \}_{j \in I}$ to be the family that
has $\theta'_i = \theta_i \wedge \psi_1$, and is everywhere else
equal to $\{ \theta_j \}_{j \in I}$.
We have that $\phi^{+\theta}$ entails
$\phi^{+\theta}[i \towedge \psi_1] \equiv \phi^{+\theta'}$,
and that
$\phi^{+\theta'}$ entails $\phi^{+\theta'}[i \towedge \psi_2]$.
It follows that $\phi^{+\theta}$ entails
$\phi^{+\theta'}[i \towedge \psi_2] \equiv 
\phi^{+\theta}[i \towedge (\psi_1 \wedge \psi_2)]$.

In the case of ($\forall$-elimination),
by induction, we have that
$\phi^{+\theta}$ 
entails 
$\phi^{+\theta}[j \towedge \psi']$,
where $\psi'$ is the formula for the judgement $(j, V, F)$.
We take $\psi = \forall y \psi'$.
%
We claim that the formula 
$\phi^{+\theta}[j \towedge \psi']$
is logically equivalent to
$\phi^{+\theta}[i \towedge \psi]$,
which suffices to give that
$\phi^{+\theta}$ 
entails 
$\phi^{+\theta}[i \towedge \psi]$.
This is because the subformula of
$\phi^{+\theta}[j \towedge \psi']$
at location $i$ is logically equivalent to
$(\forall y (\phi^{+\theta}(j) \wedge \psi')) \wedge \theta_i$
which is logicically equivalent to 
$(\forall y \phi^{+\theta}(j)) \wedge (\forall y \psi') \wedge \theta_i$.

In the cases of (upward flow) and (downward flow),
we take $\psi$ to be equal to 
the formula that is given to us by the previous judgement.
It is straightforwardly verified that
$\phi^{+\theta}[i \towedge \psi]$ 
and
$\phi^{+\theta}[j \towedge \psi]$ 
are logically equivalent.
 \end{proof}

\begin{theorem}
\label{thm:sound-and-complete}
Let $(\phi, \relb)$ be a QCSP instance.
An empty judgement on $(\phi, \relb)$ is derivable
if and only if $\relb \not\models \phi$.
\end{theorem}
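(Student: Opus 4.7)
The plan is to handle the two directions separately. For soundness, suppose an empty judgement $(i, V, \emptyset)$ is derivable. Apply Lemma~\ref{lemma:judgement-gives-formula} using the trivial family $\{\theta_j\}_{j \in I_\phi}$ defined by $\theta_j = \top$, so that $\phi^{+\theta} \equiv \phi$. The lemma hands back a qc-formula $\psi$ with $\free(\psi) = V$ that defines the empty relation over $\relb$, together with the equivalence $\phi^{+\theta} \equiv \phi^{+\theta}[i \towedge \psi]$. Since $\psi$ is unsatisfiable over $\relb$, so is the subformula $\phi^{+\theta}(i) \wedge \psi$ sitting at location $i$ in $\phi^{+\theta}[i \towedge \psi]$. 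A quick structural induction on qc-formulas confirms that replacing any subformula by an unsatisfiable one yields an unsatisfiable formula: conjunction with $\bot$ is $\bot$, and both $\forall y \bot$ and $\exists y \bot$ are equivalent to $\bot$. Hence $\relb \not\models \phi^{+\theta}[i \towedge \psi]$, and since $\phi^{+\theta} \equiv \phi$, also $\relb \not\models \phi$.

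For completeness, I would prove by induction on the structural depth of $\phi(i)$ the stronger claim that, for every location $i$ of $\phi$, the judgement $(i, \free(\phi(i)), F_i^*)$ is derivable, where $F_i^* = \{ f : \free(\phi(i)) \to B \mid \relb, f \models \phi(i) \}$ is the set of satisfying assignments of $\phi(i)$. The atom case is immediate from the (atom) rule. When $\phi(i)$ is a conjunction with conjuncts at child locations $j_1, \ldots, j_n$, obtain the inductive judgements at the children, lift each by (upward flow) to location $i$, and iterate (join) to obtain the constraint $F_{j_1}^* \Join \cdots \Join F_{j_n}^*$, which coincides with $F_i^*$. When $\phi(i) = \forall y \phi(j)$ with $y \in \free(\phi(j))$, apply ($\forall$-elimination) to the inductive judgement at $j$ to obtain $(i, \free(\phi(j)) \setminus \{y\}, \epsilon_y F_j^*)$; unfolding the definition of $\epsilon_y$ against the semantics of $\forall$ shows this equals $F_i^*$. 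When $\phi(i) = \exists y \phi(j)$ with $y \in \free(\phi(j))$, apply (projection) at $j$ to eliminate $y$ and then (upward flow) to move to location $i$, yielding $F_i^*$ by the semantics of $\exists$. Applying the claim at the root $r$ of $\phi$ yields the judgement $(r, \emptyset, F_r^*)$, which is empty precisely when $\relb \not\models \phi$.

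The main mild obstacle lies in the degenerate quantifier cases where the bound variable does not occur free in its body. Here ($\forall$-elimination) and (projection) are not directly applicable (both require the bound variable to lie in the judgement's variable set), but these cases reduce to a bare (upward flow): when $y \notin \free(\phi(j))$ we have $\free(\phi(i)) = \free(\phi(j))$ and $F_i^* = F_j^*$, so lifting the inductive judgement from $j$ to $i$ suffices. All remaining manipulations are routine translations between each logical connective's semantics and the corresponding proof-system rule, and no ingredient beyond Lemma~\ref{lemma:judgement-gives-formula} is required.
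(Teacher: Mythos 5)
Your proof is correct and follows essentially the same route as the paper: soundness by applying Lemma~\ref{lemma:judgement-gives-formula} with the trivial family $\theta_j = \top$, and completeness by induction showing that at every location $i$ the judgement carrying exactly the satisfying assignments of $\phi(i)$ is derivable via (atom), (upward flow)/(join), (projection), and ($\forall$-elimination). Your explicit treatment of the degenerate case $y \notin \free(\phi(j))$ (falling back to bare upward flow) is a small detail the paper's proof glosses over, but it is not a different approach.
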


This theorem is proved in the following way.
The forward direction follows immediately from the previous lemma.
For the backward direction, we show by induction that,
for each location $i$, there exists a derivable judgement
for which the formula $\psi$ given by the previous lemma is
equal to $\phi(i)$!

\begin{proof}
Suppose that an empty judgement
$(i, V, F)$ is derivable.
Then by invoking the (projection) rule, the empty judgement
$(i, \emptyset, \emptyset)$ is derivable.
Define
$\{ \theta_j \}_{j \in I}$ so that $\theta_j$ is the true formula $\top$ for each $j \in I$;
then, invoking Lemma~\ref{lemma:judgement-gives-formula},
we have that there exists a qc-formula $\psi$ that is false
on $\relb$ (that is, $\relb \not\models \psi$)
and such that
$\phi$ entails $\phi[i \towedge \psi]$.
We have that $\relb \not\models \phi[i \towedge \psi]$,
and hence 
(since $\phi$ entails $\phi[i \towedge \psi]$)
we have that $\relb \not\models \phi$.

Suppose that $\relb \not\models \phi$.
We claim that for each location $i$,
there exists a derivable judgement $(i, V, F)$
where the corresponding formula $\psi$,
given by the proof of Lemma~\ref{lemma:judgement-gives-formula},
is equal to $\phi(i)$.
This suffices, as then the root location $r$ has a derivable judgement
$(r, V, F)$
such that $F = \emptyset$.
We establish the claim by induction.

When $\phi(i)$ is an atom, we use the judgement 
given by (atom) in the proof system
(Definition~\ref{proof-system}).
When $\phi(i)$ is a conjunction, let $j$ and $k$ be the children of
$i$,
so that $\phi(i) = \phi(j) \wedge \phi(k)$.
Let $(j, V_j, F_j)$ and $(k, V_k, F_k)$ be the derivable judgements
given by induction.
By (upward flow) in the proof system, we have
that $(i, V_j, F_j)$ and $(i, V_k, F_k)$ are derivable judgements;
by invoking (join), we obtain the desired judgement.
When $\phi(i)$ begins with existential quantification,
let $j$ be the child of $i$, and denote
$\phi(i) = \exists x \phi(j)$.
Let $(j, V, F)$ be the derivable judgement given by induction;
by applying the rule (projection) 
to obtain a constraint on $V \setminus \{ x \}$ 
and then the rule (upward flow), we obtain the
desired derivation.
When $\phi(i)$ begins with universal quantification,
let $j$ be the child of $i$, and denote
$\phi(i) = \forall y \phi(j)$.
Let $(j, V, F)$ be the derivable judgement
given by
induction;
by the ($\forall$-elimination) rule, we obtain the desired derivation.
\end{proof}

\section{Propositional proof system}
\label{sect:prop-proof-system}

\newcommand{\overnot}[1]{\overline{#1}}

\emph{In this section, we introduce a different proof system,
which is a propositional interpretation of the QCSP proof system.
For differentiation, we refer to judgements and judgement proofs
as defined in the previous section as constraint judgements
and constraint judgement proofs.}

A \emph{literal} is a propositional variable $v$ or the negation
$\overnot{v}$ thereof.
Two literals are \emph{complementary}
if one is a variable $v$ and the other is $\overnot{v}$;
each is said to be the \emph{complement} of the other.
A \emph{clause} is a disjunction of literals that 
contains, for each variable, at most one literal on the variable;
a clause is sometimes viewed as the set of the literals that
it contains.  
A clause is \emph{empty} if it does not contain any literals.
The variables of a clause are simply the variables that
underlie
the clause's literals, and the set of variables of a clause $\alpha$
is denoted by $\vars(\alpha)$.
A clause $\gamma$ is 
a \emph{resolvent} of two propositional clauses $\alpha$ and $\beta$
if there exists a literal $L \in \alpha$ such that its complement $M$
is in $\beta$, and
$\gamma = (\alpha \setminus \{ L \}) \cup (\beta \setminus \{ M \})$.
A clause $\gamma$ is \emph{falsified} by a propositional assignment
$a$
if $a$ is defined on $\vars(\gamma)$ and each literal in $\gamma$
evalutes to false under $a$.

We define a \emph{QCBF instance} to be a propositional formula
not having free variables that is built from clauses,
conjunction, and universal and existential quantification
over propositional variables.
As with QCSP instance, we assume that each QCBF instance $\psi$
has an associated index set that contains an index for each subformula
of $\psi$.
Note that a clause is not considered to have any subformulas, other
than itself.
As an example, consider the QCBF instance
$\exists x \forall y \exists z 
((\overnot{y} \vee z) \wedge (y \vee \overnot{z} \vee x))$.
This formula would have $6$ indices: one for each of the two clauses,
one for the conjunction of the two clauses, and one for each of the 
quantifiers.

Let $\psi$ be a QCBF instance.
A \emph{clause judgement} (on $\psi$) is a pair $(i, \alpha)$
where $i \in I_{\psi}$ and $\alpha$ is a clause with
$\vars(\alpha) \subseteq \free(\psi(i))$; a clause judgement $(i, \alpha)$
is \emph{empty} if $\alpha$ is empty.

\begin{definition} \label{def:clause-judgement-proof}
\normalfont
A \emph{clause judgement proof} on a QCBF instance $\psi$
is a finite sequence of clause judgements, 
each of which has one of the following types:

\tinyskip

\begin{tabular}{lp{8.5cm}}

(clause) & $(i, \alpha)$
\tinyskip

 where $\phi(i)$ is the clause $\alpha$

\\

(resolve) &  $(i, \gamma)$
\tinyskip

where $(i, \alpha)$ and $(i, \beta)$ are previous clause judgements,
and $\gamma$ is a resolvent of $\alpha$ and $\beta$

\\

(upward flow) & $(i, \alpha)$
\tinyskip

where $(j, \alpha)$
is a previous clause judgement and

$i$ is the parent of $j$

\\

($\forall$-removal) & $(i, \alpha \setminus \{ y, \overnot{y} \})$
\tinyskip

where $(j, \alpha)$ is a previous clause judgement,

$\phi(i) = \forall y \phi(j)$, and $i$ is the parent of $j$

\\

(downward flow) & $(j, \alpha)$
\tinyskip

where $(i, \alpha)$ is a previous clause judgement
and 

$i$ is the parent of $j$

\\

\end{tabular}

We say that a clause judgement $(i, \alpha)$ is \emph{derivable}
if there exists a clause judgement proof that contains the clause
judgement.
\qed \end{definition}

The \emph{width} of a clause judgement $(i, \alpha)$ is
$|\vars(\alpha)|$.
The \emph{width} of a clause judgement proof is the maximum width
over all of its clause judgements; 
the \emph{length} of a clause judgement proof
is the number of
judgements that it contains.
In a clause judgement proof, we refer to judgements
that are not derived by the rules (upward flow) and
(downward flow) as \emph{non-flow judgements}.

Let us emphasize that we allow resolution 
over both existential and universal variables,
and the resolvent must be non-tautological,
because it must be a clause (for our definition of \emph{clause}).

\subsection{Relationship to the QCSP proof system}

We now define the notion of a \emph{QCSP translation}
of a QCBF instance $\psi$, which intuitively is a QCSP instance
that behaves just like $\psi$.  
When discussing QCSP translations, we will be concerned with 
structures $\relb$ that have
just one sort $s$ with $B_s = \{ 0, 1 \}$;
we slightly abuse notation and simply write $B = \{ 0, 1 \}$.

\begin{definition} 
\normalfont
When $\psi$ is a QCBF instance, define a \emph{QCSP translation} of $\psi$
to be a QCSP instance $(\phi, \relb)$ where 
$\relb$ is a one-sorted structure with
$B = \{ 0, 1 \}$
and where 
$\phi$ is obtainable from $\psi$ by 
replacing each clause $\gamma$ having variables $v_1, \ldots, v_k$
with an atom $R(v_1, \ldots, v_k)$
such that 
$$R^{\relb} = \{ (f(v_1), \ldots, f(v_k)) ~|~ 
\mbox{$f: \{ v_1, \ldots, v_k \} \to \{ 0, 1 \}$ satisfies $\gamma$} \};$$
we typically assume that $I_{\phi} = I_{\psi}$ and that
each subformula of $\phi$ has the same index as the natural
corresponding
subformula of $\psi$.
\end{definition}

Note that when $\psi$ is a QCBF instance and $(\phi, \relb)$ is a QCSP
translation thereof,
it can be immediately verified, by induction, that for each index $i$,
an assignment $g$ to $\{ 0, 1 \}$ that is defined on
$\free(\psi(i)) = \free(\phi(i))$
satisfies $\psi(i)$ if and only if it satisfies $\phi(i)$.
In particular, we have that $\psi$ is true if and only if $\phi$ is
true on $\relb$.

We prove that our clause judgement proof system
is a faithful interpretation of our QCSP proof system,
as made precise by the following theorem.

\begin{theorem}
\label{thm:relationship}
Let $\psi$ be a QCBF instance and let $(\phi, \relb)$
be a QCSP translation of $\psi$.
For each clause judgement $(i, \alpha)$
that is derivable from $\psi$,
there exists a constraint judgement $(i, \vars(\alpha), F)$
derivable from $(\phi, \relb)$
such that the unique $g: \vars(\alpha) \to \{ 0, 1 \}$
that does not satisfy $\alpha$ is not in $F$.
The other way around,
for each constraint judgement $(i, V, F)$ that is derivable from
$(\phi, \relb)$,
and for each mapping $g: V \to \{ 0, 1 \}$ with $g \notin F$,
there exists a clause judgement $(i, \alpha)$
derivable from $\psi$ where $\vars(\alpha) \subseteq V$
and $g$ does not satisfy $\alpha$.
Consequently, 
an empty clause judgement is derivable from $\psi$
if and only if
an empty constraint judgement is derivable from $(\phi,
\relb)$.
\end{theorem}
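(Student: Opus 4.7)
The plan is to establish both directions by induction on the length of the given proof, case-analyzing the final rule and showing that each rule in one system can be simulated by one or more rules in the other while preserving the stated invariant. The base cases match directly: the (clause) rule applied to $\alpha = \phi(i)$ corresponds to the (atom) rule applied to the translated atom $R(v_1,\ldots,v_k)$, and the definition of $R^{\relb}$ ensures that the $F$ so produced consists of precisely the satisfying assignments of $\alpha$, and hence excludes the unique falsifying one. The rules (upward flow) and (downward flow) transfer verbatim in both directions.

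For the forward direction, the key case is (resolve). Given constraint judgements $(i, \vars(\alpha), F_\alpha)$ and $(i, \vars(\beta), F_\beta)$ supplied by the induction hypothesis, I would apply (join) to obtain $(i, \vars(\alpha) \cup \vars(\beta), F_\alpha \Join F_\beta)$ and then (projection) onto $\vars(\gamma)$, where $\gamma$ is the resolvent on some variable $v$. The unique falsifying assignment $h$ of $\gamma$ cannot be extended to $v$ inside $F_\alpha \Join F_\beta$: one of the two extensions of $h$ restricts to the falsifying assignment of $\alpha$ (absent from $F_\alpha$), while the other restricts to the falsifying assignment of $\beta$ (absent from $F_\beta$), so neither extension lies in the join and $h$ falls out after projection. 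For ($\forall$-removal) with $y \in \vars(\alpha)$ I apply ($\forall$-elimination); the $\epsilon_y$ operator excludes the falsifying assignment of $\alpha \setminus \{y, \overnot{y}\}$ because one of its two $y$-extensions is the falsifying assignment of $\alpha$ itself, which lies outside $F$. When $y \notin \vars(\alpha)$, (upward flow) suffices.

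For the backward direction, the main obstacle is simulating (projection) by (resolve), which I would do by peeling variables off one at a time. Given $g: U \to \{0,1\}$ with $g \notin F \res U$, both extensions $g_0 = g[y \to 0]$ and $g_1 = g[y \to 1]$ lie outside $F$ for any fixed $y \in V \setminus U$, so the induction hypothesis supplies clauses $\alpha_0, \alpha_1$ respectively falsified by them. If $y$ is absent from some $\alpha_b$, that clause already works for $g$; otherwise falsification by $g_0$ forces $y \in \alpha_0$ and falsification by $g_1$ forces $\overnot{y} \in \alpha_1$, and one resolves on $y$. The essential check is that the resolvent is non-tautological (as the paper's definition of \emph{clause} requires): any variable $v \neq y$ appearing with opposite literals in $\alpha_0 \setminus \{y\}$ and $\alpha_1 \setminus \{\overnot{y}\}$ would be forced to take opposite values under $g_0$ and $g_1$, but these two assignments agree off $y$, yielding a contradiction. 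The remaining rules are more direct: (join) is handled by noting that $g \notin F_1 \Join F_2$ implies $g \res U_i \notin F_i$ for some $i$, and the inductively supplied clause on $U_i$ is already a clause on $U_1 \cup U_2$; ($\forall$-elimination) translates to ($\forall$-removal) or (upward flow) according to whether $y$ appears in the supplied clause. The concluding equivalence for empty judgements is then immediate: an empty clause judgement yields $(i, \emptyset, F)$ with the unique map $e$ on $\emptyset$ missing from $F$, forcing $F = \emptyset$; conversely an empty constraint judgement can be projected (via the (projection) rule) to $(i, \emptyset, \emptyset)$, and the backward statement applied to $g = e$ returns a clause with $\vars(\alpha) \subseteq \emptyset$, that is, the empty clause.
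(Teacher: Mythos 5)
Your proposal is correct and follows essentially the same route as the paper: an induction on proof length that simulates (clause)/(atom), (resolve) via (join)+(projection), ($\forall$-removal) via ($\forall$-elimination), and the flow rules verbatim in one direction, and in the other direction simulates (projection) by resolving away one variable at a time, handles (join) with no new clauses, and ($\forall$-elimination) by ($\forall$-removal). The only difference is that the paper additionally tracks quantitative length and width bounds in its two auxiliary theorems (not needed for the stated equivalence), while you spell out some details the paper leaves implicit, such as the non-tautology of the constructed resolvents and the corner cases where the universal variable is absent from the clause.
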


The proof of this theorem is provided in Section~\ref{sect:thm:relationship}.

\subsection{Simulation of Q-resolution}

We now show that our clause judgement proof system
simulates Q-resolution~\cite{BuningKarpinskiFlogel95-resolution}, 
as made precise by the following theorem.

\begin{theorem}
\label{thm:simulation-of-qres}
Let $\psi$ be a QCBF instance in prenex form,
whose quantifier-free part is a conjunction of clauses
with index $c$.
If a clause $\gamma$ is derivable from $\psi$ by Q-resolution,
then the clause judgement $(c, \gamma)$
is derivable from $\psi$ by the clause judgement proof system.
\end{theorem}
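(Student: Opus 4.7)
The plan is to proceed by induction on the length of a Q-resolution derivation of $\gamma$ from $\psi$. Throughout, write $\psi = Q_1 v_1 \cdots Q_n v_n M$ where the matrix $M$ has index $c$, and for each $k \in \{1, \ldots, n+1\}$ let $j_k$ denote the index of the subformula $Q_k v_k \cdots Q_n v_n M$, so that $j_{n+1} = c$ and $\free(\psi(j_k)) = \{v_1, \ldots, v_{k-1}\}$. A key observation is that, for any descendant $i$ of $j_k$, one has $\free(\psi(j_k)) \subseteq \free(\psi(i))$, so any clause with variables in $\{v_1, \ldots, v_{k-1}\}$ may be moved freely by (upward flow) and (downward flow) along the path between $c$ and $j_k$.

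For the base case, $\gamma$ is an initial matrix clause, occurring as $\psi(c')$ for some descendant $c'$ of $c$ inside the conjunction forming $M$. I would derive $(c', \gamma)$ by (clause) and then apply (upward flow) repeatedly to reach $(c, \gamma)$, which is valid since every variable of $\gamma$ remains free at every ancestor along the path up to $c$.

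For the inductive step, I would treat the two kinds of Q-resolution steps separately. When $\gamma$ is a resolvent of Q-derivable clauses $\alpha$ and $\beta$, the induction hypothesis yields $(c, \alpha)$ and $(c, \beta)$, and a single application of (resolve) at $c$ produces $(c, \gamma)$; the non-tautology side condition on the resolvent is imposed in both systems. When $\gamma = \alpha \setminus \{L, \overnot{L}\}$ is obtained from $\alpha$ by removing a universal literal $L$ on $v_k$, I would use the Q-resolution side condition to treat $v_k$ as the prefix-rightmost variable of $\alpha$, so that $\vars(\alpha) \subseteq \{v_1, \ldots, v_k\} = \free(\psi(j_{k+1}))$. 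Applying the induction hypothesis to get $(c, \alpha)$, iterated (upward flow) lifts this judgement to $(j_{k+1}, \alpha)$, one application of ($\forall$-removal) at $j_k$ produces $(j_k, \gamma)$, and iterated (downward flow) returns us to $(c, \gamma)$.

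The main obstacle is the $\forall$-reduction case. Because ($\forall$-removal) fires only at the index of the quantifier binding $v_k$ and requires the judgement to sit at that quantifier's child, the rule in effect forces the removed variable to be the prefix-rightmost variable of the current clause. Verifying that each Q-resolution $\forall$-reduction step can be cast in this form---either because the flavour of Q-resolution under consideration already restricts to rightmost reductions, or by normalizing an arbitrary derivation into that form---is the step requiring the most care and is what ultimately makes the straightforward inductive plan above go through.
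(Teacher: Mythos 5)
Your plan reproduces the paper's proof almost step for step: initial clauses are brought to $c$ via (clause) followed by (upward flow), resolvents are handled by a single (resolve) at $c$, and universal reduction is simulated by flowing the clause up to the child of the quantifier binding the removed variable, applying ($\forall$-removal) there, and flowing back down to $c$. The paper organizes this as a closure argument (a smallest set $\mathcal{C}$ of clauses containing the matrix clauses, closed under resolvents, and closed under removal of the clause's \emph{innermost} quantified variable when it is universal), but that is the same induction you propose.

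The one point you leave open is a real one, and as stated your treatment of it is not correct: the Q-resolution side condition does \emph{not} let you ``treat $v_k$ as the prefix-rightmost variable of $\alpha$'' --- it only guarantees that $v_k$ is quantified after every \emph{existential} variable of $\alpha$. If $\alpha$ retained a universal variable quantified after $v_k$, then $\vars(\alpha) \not\subseteq \free(\psi(j_{k+1}))$, the judgement $(j_{k+1},\alpha)$ would not even be well-formed, and your flow-up step would fail. What closes the gap is precisely the observation that in the Q-resolution of B\"uning, Karpinski and Fl\"ogel, $\forall$-reduction deletes the universal literals that trail \emph{all} existential literals of the clause; consequently every variable of the clause quantified after such a $v_k$ is itself a trailing universal being deleted, and the reduction decomposes into a sequence of single removals performed innermost-first. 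Each single removal then does remove the prefix-rightmost variable of the current clause, so your flow-up/($\forall$-removal)/flow-down template applies at every stage and the induction goes through. (The paper absorbs exactly this point into the definition of $\mathcal{C}$ and dismisses the inclusion of Q-resolution-derivable clauses in $\mathcal{C}$ as straightforward; note that for a hypothetical variant of Q-resolution allowed to drop a single trailing universal while keeping a deeper universal literal, the resulting clause genuinely could not be produced at $c$ by the clause judgement system, so appealing to the actual form of the $\forall$-reduction rule is not optional.)
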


\begin{proof}
It is straightforwardly verified that each clause derivable
by Q-resolution from $\psi$
is contained in the smallest set $\fancyc$
of clauses satisfying the following recursive definition:
\begin{itemize}

\item Each clause $\alpha$ appearing in $\psi$ is in $\fancyc$.

\item $\fancyc$ is closed under taking resolvents.

\item If $\alpha \in \fancyc$ and $y \in \vars(\alpha)$
is universally quantified and is the first variable in $\vars(\alpha)$
to be quantified on the unique path from $c$ to the root of $\psi$,
then $\alpha \setminus \{ y, \overnot{y} \}$ is in $\fancyc$.

\end{itemize}

\noindent It suffices to show, then, that for each $\alpha \in \fancyc$,
the clause judgement $(c, \alpha)$ is derivable.
We consider the three types of clauses according to the just-given
recursive definition.
For each clause $\alpha$ appearing in $\psi$,
the clause judgement $(c, \alpha)$ is derivable by 
applying the (clause) rule at the location of $\alpha$,
followed by one application of the (upward flow) rule.
For a clause that is a resolvent of two other clauses,
one can simply apply the (resolve) rule.
Finally, suppose that $(c, \alpha)$ is derivable and that 
$y \in \vars(\alpha)$ satisfies the described condition.
We need to show that $\alpha \setminus \{ y, \overnot{y} \}$
is derivable.
Let $j$ be the first location where $y$ is quantified
when walking from $c$ to the root, and
let $K$ be the set of nodes appearing on the unique path from $c$ 
to the child of $j$ (inclusive).
By the definition of $\fancyc$, 
no variable in $\vars(\alpha)$
is quantified at a location in $K$ and $\vars(\alpha) \subseteq
\free(\psi(k))$
for each $k \in K$; 
hence, (upward flow) can be applied repeatedly to derive
$(k, \alpha)$ for each $k \in K$.
By applying ($\forall$-removal) at the child of $j$,
we obtain $(j, \alpha \setminus \{ y, \overnot{y} \}$;
then, (downward flow) can be applied repeatedly to derive
$(c, \alpha \setminus \{ y, \overnot{y} \}$.
\end{proof}

The soundness of Q-resolution
(derivability of an empty clause implies falsehood)
is thus a consequence of this theorem,
Theorem~\ref{thm:relationship},
and Theorem~\ref{thm:sound-and-complete}.

\subsection{Algorithmic interpretation}
\label{subsect:algorithm}

We define the following notions relative to 
a QCBF instance $\psi$.
We view $\psi$ as a rooted tree.
When $i, j \in I_{\psi}$, we write 
$i \leq_{\psi} j$ if $i$ is an ancestor of $j$, that is, if $i$ occurs
on the unique path from $j$ to the root;
we write $i <_{\psi} j$ if $i \leq_{\psi} j$ and $i \neq j$.
We define a \emph{located variable} to be a pair $(i, u)$ where
$i \in I_{\psi}$ is an index, and $u$ is a variable that is quantified
at location $i$;
this pair is a \emph{$\forall$-located variable} 
if $u$ is universally quantified
at location $i$.
We say that index $j \in I_{\psi}$ \emph{follows} a located variable
$(i, u)$ 
if $i <_{\psi} j$ 
and for each index $k \in I_{\psi}$ with
$i <_{\psi} k \leq_{\psi} j$, it holds that $u \in \free(\phi(k))$.
We say that a located variable $(j, v)$ \emph{follows} a located
variable
$(i, u)$ if $j$ follows $(i, u)$.
We say that an index $j$ 
or a located variable $(j, v)$
\emph{follows} a set $S$ of located variables
if $j$ 
follows each located variable in $S$.
A set $S$ of located variables is \emph{coherent} 
if for any two distinct elements $(i, u), (j, v) \in S$,
one follows the other 
(that is, either $(i, u)$ follows $(j, v)$
or $(j, v)$ follows $(i, u)$).
When $S$ is a set of located variables, we use $\vars(S)$
to denote the set of variables occurring in 
the located variables in $S$.
Observe that when a set $S$ of located variables is coherent,
no variable occurs in two distinct located variables in $S$, and so
$|S| = |\vars(S)|$.

We now present a nondeterministic, recursive algorithm 
that, in a sense to be made precise, corresponds to the proof system.
At each point in time, the algorithm maintains a set $S$
of coherent variables;
actions it may perform include branching on 
a located variable $(i, u)$ such that adding $(i, u)$ to $S$
is still coherent, and, nondeterministically setting
a $\forall$-located variable $(i, y)$ that follows $S$.
The algorithm returns either the false value $F$ or the indeterminate value
$\bot$.  On these two values, we define the operation $\vee$ by
$F \vee F = F$ and $\bot \vee F = F \vee \bot = \bot \vee \bot =
\bot$.  Intuitively speaking, the algorithm 
returns the indeterminate value $\bot$ when
a nondeterministically selected action cannot be carried out.
We assume that when the algorithm is first invoked on a given QCBF
instance,
the set $S$ is initially assigned to the empty set.

{\footnotesize{ 
\begin{alltt}
Algorithm Detect_Falsity( QCBF instance \(\psi\), coherent set \(S\),
                            assignment \(a:\vars(S)\to\{0,1\}\))
\{
  Select nondeterministically and perform one of the following:

  (falsify)  check if there exists a location \(i\) following \(S\)
  such that \(\psi(i)\) is a clause falsified by \(a\) with \(\vars(\psi(i))=\vars(S)\);
  if so, return F, else return \(\bot\);

  (Q-branch)  check if there exists a located variable \((i,u)\notin\hspace{.1pt}S\)
  such that \(S\cup\{(i,u)\}\) is coherent; if not, return \(\bot\), else:
    - nondeterministically select such a located variable \((i,u)\);
    - nondeterministically pick subsets \(S\sb{0},S\sb{1}\subseteq\hspace{.1pt}S\) with \(S\sb{0}\cup\hspace{.1pt}S\sb{1}=S\);
    - return Detect-Falsity(\(\psi,S\sb{0}\cup\{(i,u)\},(a\res\vars(S\sb{0}))[u\to0]\)) \(\vee\)
             Detect-Falsity(\(\psi,S\sb{1}\cup\{(i,u)\},(a\res\vars(S\sb{1}))[u\to1]\))

  (\(\forall\)-branch)  check if there exists a \(\forall\)-located variable \((i,y)\) 
  that follows \(S\); if not, return \(\bot\), else:
    - nondeterministically select such a \(\forall\)-located variable \((i,y)\);
    - nondeterministically pick a value \(b\in\{0,1\}\);
    - return Detect-Falsity(\(\psi,S\cup\{(i,y)\},a[y\to\vspace{.1pt}b]\))
\}
\end{alltt}
}}

Relative to a clause judgement proof, we employ the 
following terminology.
A clause judgement $(j, \beta)$ that is derived using a previous
judgement $(i, \alpha)$ is said to be a \emph{successor} of 
$(i, \alpha)$;
also, $(i, \alpha)$ is said to be a \emph{predecessor} of 
$(j, \beta)$.
So, a clause judgement derived using the (clause) rule has $0$
predecessors,
one derived using the (resolve) rule has $2$ predecessors,
and one derived using one of the other rules has $1$ predecessor.
We say that a clause judgement proof is \emph{tree-like}
if each clause judgement has at most one successor;
in this case, each clause judgement $(i, \alpha)$
naturally induces a tree (defined recursively) where:
each node is labelled with a clause
judgement;
the root is labelled with $(i, \alpha)$;
and, for each predecessor of the clause judgement $(i, \alpha)$,
the root has a child which is the tree of the predecessor.

We formalize the notion of a trace of the nondeterministic algorithm.
A \emph{trace} of a QCBF instance $\psi$ is a rooted tree where:
\begin{itemize}

\item  each node has a label
$(S, a)$ where $S$ is a coherent set of located variables and
$a: \vars(S) \to \{ 0, 1 \}$ is an assignment; 

\item each node has $0$, $1$, or $2$ children;

\item when a node has $2$ children and label $(S, a)$, 
the labels of the two children
could be generated by the (Q-branch) step from $(S, a)$;

\item when a node has $1$ child and label $(S, a)$,
the label of the child
could be generated by the ($\forall$-branch) step from $(S, a)$;

\item when a node has $0$ children and label $(S, a)$,
the node has an associated index $i$ that follows $S$
and such that $\psi(i)$ is a clause falsified by $a$
with $\vars(\psi(i)) = \vars(S)$.

\end{itemize}

\noindent We take it as evident that this notion of trace
properly formalizes the recursion trees that the algorithm generates.

Let $e$ denote the unique assignment from $\emptyset$ to $\{ 0, 1 \}$.
We now show that, up to polynomial-time computable translations,
tree-like clause judgement proofs of an empty clause correspond
precisely to traces having root label $(\emptyset, e)$.

\begin{theorem}
\label{thm:trace-proof}
Let $\psi$ be a QCBF instance; let $n \geq 1$.
There exists a tree-like clause judgement proof $P$
(viewed as a tree) of an empty clause with $n$ non-flow judgements
if and only if
there exists a trace $T$ whose root has label $(\emptyset, e)$
and having $n$ nodes.
Moreover, both implied translations 
(from proof to trace, and from trace to proof)
can be computed in polynomial time.
\end{theorem}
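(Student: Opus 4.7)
The plan is to prove the equivalence by explicit recursive constructions in both directions, with the correspondence sending (clause) non-flow judgements to (falsify) leaves, (resolve) judgements to (Q-branch) internal nodes, and ($\forall$-removal) judgements to ($\forall$-branch) internal nodes; flow judgements in the proof serve only to relocate clauses and so do not contribute to the count $n$. This already yields a tree-isomorphism between the \emph{reduced proof tree} of non-flow judgements and the trace; the substance of the argument lies in verifying that the labels, associated indices, and derived clauses can be chosen consistently so that all invariants are preserved.

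For the forward direction, first extract the reduced tree of non-flow judgements of $P$, where the children of a non-flow judgement are its nearest non-flow predecessors, reached by following chains of flow judgements. This reduced tree has $n$ nodes and mirrors the desired trace structure. Then inductively assign to each non-flow judgement $(j, \alpha)$ a trace label $(S, a)$ where $a$ is the unique falsifying assignment of $\alpha$ and $S$ consists of the located variables $(i_v, v)$ for $v \in \vars(\alpha)$, with $i_v$ denoting the quantification location of $v$. Coherence of $S$ follows because every variable in $\vars(\alpha) \subseteq \free(\psi(j))$ has its quantifier on the unique path from the root to $j$, so these quantifier locations are linearly ordered. For a (resolve) judgement on variable $u$, branch on $(i_u, u)$ and choose $S_0, S_1$ to consist of the located variables for the non-$u$ variables of the two input clauses; for a ($\forall$-removal) judgement removing the literal $y$ (respectively $\overnot{y}$) from the premise, pick the nondeterministic bit $b = 0$ (respectively $b = 1$).

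For the backward direction, given a trace $T$ rooted at $(\emptyset, e)$ with $n$ nodes, recursively construct a tree-like proof. At each (falsify) leaf with associated index $i$, emit the (clause) judgement $(i, \psi(i))$. At each ($\forall$-branch) internal node with $\forall$-located variable $(i, y)$ and child subproof deriving some $(j_c, \alpha_c)$, prepend (upward flow) steps from $j_c$ up to the child of $i$ and then apply ($\forall$-removal) to obtain a judgement at location $i$. At each (Q-branch) internal node with located variable $(i_Q, u)$ and two child subproofs, use flow steps to bring both derived clauses to a common location and then apply (resolve) on $u$; the invariant that the two children's labels are the specified restrictions and extensions of the parent's label guarantees that the literals on $u$ in the two input clauses are complementary and thus resolvable.

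The main technical obstacle is verifying that the flow steps used to align locations remain valid, i.e., that the clauses' variables remain free at each intermediate location so that (upward flow) and (downward flow) may be applied. This is ensured by coherence of $S$ together with the definition of \emph{follows}: each $v$ in the derived clause corresponds to a located variable whose quantifier is an ancestor of every intermediate location on the alignment path, and \emph{follows} guarantees that $v$ remains free along that path. A subsidiary technicality concerns ``trivial'' ($\forall$-removal) applications where $y \notin \vars(\alpha)$, which correspond to ($\forall$-branch) steps that extend $S$ by a variable not appearing in the current clause; these can be handled uniformly, or equivalently may be replaced by (upward flow) in a canonical form. The polynomial-time claim then follows because each construction is a single traversal over a tree of size $n$, with a depth-bounded number of flow judgement insertions per step.
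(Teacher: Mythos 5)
Your proposal is correct and follows essentially the same route as the paper's own proof: inductive constructions in both directions with the correspondence (clause)$\leftrightarrow$(falsify) leaf, (resolve)$\leftrightarrow$(Q-branch), ($\forall$-removal)$\leftrightarrow$($\forall$-branch), flow judgements contributing nothing, and with the same invariants (labels $(S,a)$ built from nearest-quantifier located variables and the falsifying assignment, with coherence/\emph{follows} justifying the flow-step relocations). Even the edge case of a vacuous ($\forall$-removal) that you flag is the same one the paper treats by leaving the trace unchanged.
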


The proof of this theorem is provided
in Section~\ref{sect:thm:trace-proof}.

\section{Algebraic characterization of $k$-judge-consistency}
\label{sect:consistency}

\emph{We will assume that all structures under discussion in this section
are finite, in that each structure's universe is finite.}

\begin{definition}
\normalfont
Let $k \geq 1$.
A QCSP instance $(\phi, \relb)$ is \emph{$k$-judge-consistent}
if there does not exist a judgement proof 
of width less than or equal to $k$
that contains an empty judgement.
\end{definition}

\begin{definition} \label{def:constraint-system}
\normalfont
Let $(\phi, \relb)$ be a QCSP instance, where $\phi$ is a qc-formula
with index set $I$, and let $k \geq 1$.
A \emph{$k$-constraint system} $P$ provides,
for each $i \in I$ and 
each $V \subseteq \free(\phi(i))$ with $|V| \leq k$,
a non-empty set $P[i, V]$ of maps from $V$ to $B$ 
satisfying the following four properties:
\begin{itemize}

\item $(\alpha)$ 
If
$\phi(i)$ is an atom $R(v_1, \ldots, v_m)$ with 
$V = \{ v_1, \ldots, v_m \}$,
then \\
$P[i, \{ v_1, \ldots, v_m \}] \subseteq \{ f: \{ v_1, \ldots, v_m \}
\to B ~|~ (f(v_1), \ldots, f(v_m)) \in R^{\relb} \}$


\item $(\pi)$ If $U \subseteq V$, then
$P[i, U] = (P[i, V] \res U)$.

\item $(\lambda)$ If $j$ is a child of $i$
and $V \subseteq \free(\phi(j))$, 
then
$P[i, V] = P[j, V]$.

\item $(\epsilon)$ If $j$ is a child of $i$,
$\phi(i) = \forall y \phi(j)$, 
$U$ is a subset of $\free(\phi(j))$ with $|U| \leq k$ and $y \in U$,
and $V = U \setminus \{ y \}$,
then $P[i, V] \subseteq \epsilon_y( P[j, U] )$.

\end{itemize}
\end{definition}

\noindent We show that the existence of a $k$-constraint system
characterizes $k$-judge consistency.

\begin{theorem}
\label{thm:consistency-characterization}
Let $(\phi, \relb)$ be a QCSP instance.
There exists a $k$-constraint system $P$
for the instance if and only if the instance is $k$-judge-consistent.
\end{theorem}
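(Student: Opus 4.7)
The plan handles the two directions separately. For the forward direction, I would induct on a judgement proof of width at most $k$ to show that every derivable $(i, V, F)$ satisfies $P[i, V] \subseteq F$, matching each of the six proof rules of Definition~\ref{proof-system} to the correspondingly named closure property: (atom) to $(\alpha)$; (projection) and (join) to $(\pi)$ (with the join step using the equality $P[i, U_1 \cup U_2] \res U_j = P[i, U_j]$); (upward flow) and (downward flow) to $(\lambda)$; and ($\forall$-elimination) to $(\epsilon)$ together with the monotonicity of $\epsilon_y$. Any derivable empty judgement of width at most $k$ would then force $P[i, V] \subseteq \emptyset$, contradicting the required non-emptiness of $P[i, V]$.

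For the backward direction, I would explicitly build a $k$-constraint system from $k$-judge-consistency. For each pair $(i, V)$ with $V \subseteq \free(\phi(i))$ and $|V| \leq k$, set
$$Q[i, V] = \bigcap \{F : (i, V, F) \text{ is derivable by a proof of width at most } k\},$$
with the convention $Q[i, V] = B^V$ when the intersection is over the empty family, and then define
$$P[i, V] = \{g : V \to B : g \res V' \in Q[i, V'] \text{ for every } V' \subseteq V\}.$$
Then $(\alpha)$ follows from the (atom) rule, which gives $Q[i, V] \subseteq F_{atom}$; $(\lambda)$ follows because (upward flow) together with (downward flow) yield $Q[i, V'] = Q[j, V']$ for all $V' \subseteq \free(\phi(i)) \cap \free(\phi(j))$; and $(\epsilon)$ follows from ($\forall$-elimination) together with the commutation of $\epsilon_y$ with arbitrary intersection, which delivers $Q[i, W] \subseteq \epsilon_y Q[j, W \cup \{y\}]$. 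The easy half of $(\pi)$, namely $P[i, V] \res U \subseteq P[i, U]$, is immediate from the definition of $P$ by restriction.

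The main obstacle is the reverse inclusion $P[i, U] \subseteq P[i, V] \res U$ of $(\pi)$, together with the non-emptiness of every $P[i, V]$; both will fall out of a single join-and-project argument. Assume for contradiction that some $f \in P[i, U]$ has no extension in $P[i, V]$. For each of the finitely many $g : V \to B$ with $g \res U = f$, the failure $g \notin P[i, V]$ supplies a subset $V'_g \subseteq V$ and a derivable judgement $(i, V'_g, F^g)$ of width at most $k$ with $g \res V'_g \notin F^g$; the hypothesis $f \in P[i, U]$ forces $V'_g \not\subseteq U$. Joining these judgements produces a derivable $(i, V^*, F^*)$ of width at most $k$, where $V^* = \bigcup_g V'_g \subseteq V$; writing $U' = V^* \cap U$, one extends any hypothetical $h \in F^*$ satisfying $h \res U' = f \res U'$ to some $g^* : V \to B$ with $g^* \res U = f$ and then derives a contradiction from the membership $h \res V'_{g^*} \in F^{g^*}$ while $g^* \res V'_{g^*} \notin F^{g^*}$. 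Hence $f \res U' \notin F^* \res U'$, and projecting to $U'$ produces a derivable judgement excluding $f \res U'$, contradicting $f \res U' \in Q[i, U']$. The identical scheme, applied to all $g : V \to B$ with no prescribed $f$, shows $P[i, V] \neq \emptyset$: were it empty, the join would yield a derivable empty judgement of width at most $k$, contradicting $k$-judge-consistency. I expect the bookkeeping around $V^*$, and the verification that the projection genuinely excludes $f \res U'$, to be the delicate parts of a careful write-up.
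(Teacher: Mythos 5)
Your forward direction coincides with the paper's: the same induction over the six rules, matching (atom) to $(\alpha)$, (projection)/(join) to $(\pi)$, the flows to $(\lambda)$, and ($\forall$-elimination) to $(\epsilon)$, with non-emptiness excluding empty judgements. Your backward direction is correct but takes a genuinely different route. The paper defines $P[i,V]$ as the constraint set of the \emph{unique minimal} judgement derivable at $(i,V)$ with width at most $k$ --- existence via Lemma~\ref{lemma:judgement-gives-formula} and $k$-behavedness of the (finite) structure, uniqueness because (join) at a fixed $(i,V)$ computes intersections --- after which all four properties follow in a line or two from minimality, and non-emptiness is immediate from $k$-judge-consistency. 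You instead take $Q[i,V]$ to be the intersection of all constraint sets derivable at $(i,V)$ within width $k$ (full by convention when there are none), down-close it to define $P$, and put the real weight on an explicit join-and-project argument that delivers both the extension half of $(\pi)$ and non-emptiness; I checked that argument (the consistency of $g^*$ on $U'=V^*\cap U$, the width bound $|V^*|\le|V|\le k$ for the iterated joins, and the final projection excluding $f\res U'$) and it is sound. In fact, in the finite setting the two constructions produce the same object: the intersection of the finitely many derivable constraint sets at $(i,V)$ is itself derivable by (join), hence is exactly the paper's minimal judgement, and your closure from $Q$ to $P$ then changes nothing. What your route buys is self-containedness --- it never invokes Lemma~\ref{lemma:judgement-gives-formula} or $k$-behavedness, and the full-set convention quietly handles pairs $(i,V)$ at which nothing of width at most $k$ is derivable, a point the paper's existence claim glosses over --- at the cost of the heavier combinatorics you anticipate. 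One spot to tighten in a full write-up: you verify $(\epsilon)$ at the level of $Q$, but lifting it to $P$ requires applying $Q[i,W]\subseteq\epsilon_y Q[j,W\cup\{y\}]$ to \emph{every} subset $W$ of $V$ (together with $Q[i,W']=Q[j,W']$ for subsets avoiding $y$), not only to $V$ itself; the ingredients are in your sketch, but this quantification over subsets should be made explicit.
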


A proof of this theorem can be found in
Section~\ref{sect:thm:consistency-characterization}.

\begin{theorem}
\label{thm:deciding-consistency-ptime}
For each $k \geq 1$,
there exists a polynomial-time algorithm that,
given a QCSP instance $(\phi, \relb)$,
decides if the instance is $k$-judge-consistent.
\end{theorem}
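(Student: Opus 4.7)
The plan is to combine Theorem~\ref{thm:consistency-characterization} with a standard greatest-fixpoint computation of the largest $k$-constraint system: by that theorem, deciding $k$-judge-consistency reduces to deciding whether a $k$-constraint system for $(\phi, \relb)$ exists at all.

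The first observation I would make is that the pointwise union of two $k$-constraint systems is again a $k$-constraint system. Non-emptiness is trivially preserved; the subset conditions $(\alpha)$ and $(\epsilon)$ survive taking unions (using the monotonicity of $\epsilon_y$); and for the equality conditions $(\pi)$ and $(\lambda)$ one checks that restriction commutes with union, e.g.\ $(P_1 \cup P_2)[i,V] \res U = (P_1[i,V] \res U) \cup (P_2[i,V] \res U) = P_1[i,U] \cup P_2[i,U] = (P_1 \cup P_2)[i,U]$. Consequently, if any $k$-constraint system exists, there is a unique largest one, which is exactly what our fixpoint procedure will converge to.

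The algorithm I propose computes this greatest candidate by a direct greatest-fixpoint iteration. For each pair $(i, V)$ with $i \in I_{\phi}$, $V \subseteq \free(\phi(i))$, and $|V| \leq k$, initialize $P[i, V]$ to contain every map from $V$ to $B$. Then repeatedly locate some $f \in P[i, V]$ that violates one of the four conditions of Definition~\ref{def:constraint-system} and delete it; the equality rules $(\pi)$ and $(\lambda)$ are enforced by propagating violations in both directions (for $(\pi)$, one removes $f \in P[i, V]$ when $f \res U \notin P[i, U]$, and one removes $f \in P[i, U]$ when no $g \in P[i, V]$ restricts to $f$; analogously for $(\lambda)$). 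Halt when no further deletion is possible, and declare $(\phi, \relb)$ to be $k$-judge-consistent iff every $P[i, V]$ is still non-empty at that point.

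For the running time, the number of pairs $(i, V)$ is at most $|I_{\phi}|$ times the number of variable subsets of $\phi$ of size at most $k$, and each $P[i, V]$ holds at most $|B|^k$ maps; both quantities are polynomial in the input for fixed $k$. Since each iteration removes at least one map and the violation check runs in polynomial time, so does the entire procedure. Correctness rests on two routine claims: any $k$-constraint system $Q$ is invariant under the removal rule---no $f \in Q[i, V]$ is ever eligible for deletion, since $Q$ satisfies all four conditions---so by induction on the number of removal steps one has $Q \subseteq P$ throughout, whence if some $k$-constraint system exists the final $P$ is pointwise non-empty; and conversely, if $P$ is pointwise non-empty at termination, then $P$ itself satisfies all four conditions by the stopping criterion, and is therefore a $k$-constraint system. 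The only mild subtlety I anticipate is carefully formulating the bidirectional propagation for the equality rules $(\pi)$ and $(\lambda)$, but this is straightforward.
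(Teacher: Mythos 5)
Your proposal is correct and is essentially the paper's own proof: both reduce to the existence of a $k$-constraint system via Theorem~\ref{thm:consistency-characterization} and then run a polynomial-time propagation (greatest-fixpoint) algorithm that shrinks the tables $P[i,V]$ until the four conditions hold, using the invariant that any genuine $k$-constraint system remains contained pointwise, and that a nonempty table at termination is itself a $k$-constraint system. The only cosmetic differences are your initialization to all maps (treating $(\alpha)$ as a deletion rule rather than as the initialization) and the added, but unneeded, observation that $k$-constraint systems are closed under union.
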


\begin{proof}
We begin by describing the algorithm, which
decides, given a QCSP instance $(\phi, \relb)$,
 whether or not there exists a $k$-constraint system
(this property is equivalent to $k$-judge-consistency
by Theorem~\ref{thm:consistency-characterization}).
Throughout, $i$ and $j$ will always denote
indices from $I_{\phi}$.

For each $i \in I_{\phi}$ and $V \subseteq \free(\phi(i))$
with $|V| \leq k$,
the algorithm initializes
$Q[i, V]$ to be 
$\{ f: \{ v_1, \ldots, v_m \}
\to B ~|~ (f(v_1), \ldots, f(v_m)) \in R^{\relb} \}$
in the case that $\phi(i)$ is an atom $R(v_1, \ldots, v_m)$
and $V = \{ v_1, \ldots, v_m \}$,
and otherwise 
initializes $Q[i, V]$
to be the set of all maps from $V$ to $B$.
The algorithm then iteratively performs the following rules
(which parallel properties $(\pi)$, $(\lambda)$, and $(\epsilon)$
in the definition of $k$-constraint system)
until no changes can be made to $Q$:
\begin{itemize}

\item When $i$ is an index
 and $U \subseteq V \subseteq \free(\phi(i))$
with $|V| \leq k$, 
assign to $Q[i, U]$ the value
$Q[i, U] \cap (Q[i, V] \res U)$
and then
assign to $Q[i, V]$ the value
$\{ f \in Q[i, V] ~|~ (f \res U) \in Q[i, U] \}$.

\item If $j$ is a child of $i$,
$V \subseteq \free(\phi(i)) \cap \free(\phi(j))$,
and $|V| \leq k$,
assign to each of $Q[i, V]$, $Q[j, V]$
the value $Q[i, V] \cap Q[j, V]$.

\item If $j$ is a child of $i$,
$\phi(i) = \forall y \phi(j)$,
and $U$ is a set of variables with
$y \in U \subseteq \free(\phi(j))$ and $|U| \leq k$,
then assign to
$Q[i, U \setminus \{ y \}]$
the value $Q[i, U \setminus \{ y \}] \cap \epsilon_y(P[j, U])$.

\end{itemize}

\noindent This algorithm runs in polynomial time: 
there are polynomially many pairs $(i, V)$ for which
$Q[i, V]$ is initialized and used, and
each $Q[i, V]$ contains (at most) polynomially many maps:
when $|V| \leq k$, the number of maps from $V$ to $B$
is polynomial.  (Here, when we say \emph{polynomial},
we mean as a function of the input length.)
Applying the three given rules can be done in polynomial time;
each time they are applied, the sets $Q[i, V]$
may only decrease in size.
Hence, the process of repeatedly applying the three rules
until no changes are possible terminates in polynomial time.

We now explain why the instance is $k$-judge-consistent
if and only if no set $Q[i, V]$ is empty, which suffices
to give the theorem.
It is straightforward to verify that,
for any $k$-constraint-system $P$,
the invariant $P[i, V] \subseteq Q[i, V]$
is maintained by the algorithm.
Hence, when the algorithm terminates,
if any set $Q[i, V]$ is empty, then there does not exist
a $k$-constraint system $P$.
It is also straightforward to verify that,
when the algorithm terminates,
the four properties in the definition of $k$-constraint system
hold on $Q$.
(As an example, consider property ($\lambda$).
Suppose that $j$ is a child of $i$ and that
$V \subseteq \free(\phi(i)) \cap \free(\phi(j))$.
When the algorithm terminates,
since the second rule can no longer be applied
it must hold that
$Q[i,V] = Q[j,V] = Q[i,V] \cap Q[j,V]$.)
Hence, if the algorithm terminates without any 
empty set $Q[i, V]$, it holds that $Q$ is a $k$-constraint system.
\end{proof}

We can upper bound the number of iterations that the algorithm
performs on an instance $(\phi, \relb)$ in the following way.
Let $n$ be the maximum number of free variables, over
all subformulas of $\phi$.
For each index $i$ of $\phi$, the algorithm maintains,
for each $V \subseteq \free(\phi(i))$ with $|V| \leq k$,
a set of mappings from $V$ to $B$.
The size of such a set is at most $|B|^{|V|}$.
In each iteration, each such set of mappings
can only have mappings deleted from it.
The number of iterations is thus upper bounded by
the number of mappings that can occur in such sets of mappings,
which is
$|I_{\phi}|( {n \choose k} |B|^k + {n \choose k-1}|B|^{k-1} + \cdots + 
{n \choose 0} |B|^0 )$.

We now show that checking for $k$-judge-consistency gives a way to decide
a set of prenex qc-sentences 
that is tractable via the dichotomy theorem
on so-called prefixed graphs~\cite{ChenDalmau12-decomposingquantified}.
In particular, we prove this in the setting 
where relation symbols have bounded arity.
Let us refer to the width notion defined 
in that previous work~\cite{ChenDalmau12-decomposingquantified}
as \emph{elimination width}.
Define the \emph{Q-width} of a prenex qc-sentence $\phi$ to be
the maximum of its elimination width
and $\max_R |\ar(R)|$ (where this maximum ranges over
all relation symbols $R$ appearing in $\phi$).

\begin{theorem}
\label{thm:consistency-solves-qwidth}
Let $k \geq 1$.  Suppose that $\phi$ is a prenex qc-sentence with Q-width
$k$ (or less).
For any finite structure $\relb$, it holds that
$(\phi, \relb)$ is $k$-judge-consistent if and only if $\relb \models \phi$.
(Intuitively, this says that 
checking for $k$-judge consistency is a decision procedure
for QCSP instances involving $\phi$.)
\end{theorem}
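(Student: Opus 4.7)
The easy direction follows from soundness: if $\relb\models\phi$ then by Theorem~\ref{thm:sound-and-complete} no empty judgement is derivable on $(\phi,\relb)$ at all, and in particular none via a proof of width at most $k$, so the instance is $k$-judge-consistent. I would dispose of this in a single sentence.

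For the substantive direction, the plan is to assume $(\phi,\relb)$ is $k$-judge-consistent, extract a $k$-constraint system $P$ via Theorem~\ref{thm:consistency-characterization}, and use $P$ to construct a winning strategy for the existential player in the natural two-player evaluation game on $(\phi,\relb)$; exhibiting such a strategy shows $\relb\models\phi$. The Q-width hypothesis is used twice. The arity bound $\max_R|\ar(R)|\le k$ ensures every atom $R(v_1,\ldots,v_m)$ has $m\le k$, so by property $(\alpha)$ the constraint system records the actual relation $R^{\relb}$ at the atomic locations. The elimination-width bound furnishes a schedule for processing the prefix of $\phi$, consistent with the quantifier alternation, such that at every point in the play the set of already-assigned variables still occurring in some un-discharged atom has cardinality at most $k$.

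The central invariant I would maintain along the play is: at the current location $j$, with partial assignment $a$ defined on the variables quantified so far, there is a window $W\subseteq\free(\phi(j))$ of size at most $k$ containing the still-active assigned variables, such that $a\res W \in P[j,W]$. Universal moves are handled by property $(\epsilon)$: when the adversary plays $y\mapsto b$ at $\phi(i)=\forall y\,\phi(j)$, taking the new window $U=W\cup\{y\}$ (of size at most $k$ by the scheduling) and combining $(\lambda)$ with $(\epsilon)$ shows that $(a\res W)[y\to b] \in P[j,U]$. Existential moves are handled by property $(\pi)$: at $\phi(i)=\exists x\,\phi(j)$, using $(\lambda)$ to pass between $i$ and $j$, one picks a value $b$ witnessing that $a\res W \in P[i,W]$ extends to some element of $P[j,W\cup\{x\}]$. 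Variables whose atomic occurrences have all been discharged are projected out of $W$ via $(\pi)$.

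The principal obstacle will be the structural bookkeeping: at each quantifier one must choose the window $W$, of size at most $k$, so that the defining properties of the $k$-constraint system actually line up with the game transition. This is precisely the role played by the elimination-width hypothesis from the Chen--Dalmau prefixed-graph decomposition~\cite{ChenDalmau12-decomposingquantified}, which provides an order on the variables whose active set has size at most $k$ throughout. Once this scheduling is pinned down, verifying that the invariant is preserved at each quantifier is a routine appeal to $(\pi),(\lambda),(\epsilon)$, non-emptiness of the sets $P[\cdot,\cdot]$, and finiteness of $\relb$. At the end of the play all variables are assigned, and property $(\alpha)$ combined with the invariant forces every atom of $\phi$ to hold under the resulting total assignment, so $\relb\models\phi$.
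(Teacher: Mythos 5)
Your easy direction coincides with the paper's. The hard direction, however, hinges on a claim about elimination width that is not its definition and is false in general: you assert that Q-width at most $k$ ``furnishes a schedule for processing the prefix, consistent with the quantifier alternation, such that at every point in the play the set of already-assigned variables still occurring in some un-discharged atom has cardinality at most $k$.'' The play of the evaluation game is forced to follow the quantifier prefix, whereas elimination width is insensitive to orderings that can be undone by quantifier rearrangement. Already for a purely existential prenex sentence whose atoms form a path on $v_1,\ldots,v_n$ but whose prefix lists first all odd-indexed and then all even-indexed variables, the Q-width is a small constant (the sentence rewrites, by the transformations of Lemma~\ref{lemma:two-transformations}, into a non-prenex sentence of constant width), yet after the odd block every assigned variable still co-occurs in an atom with an unassigned one, so the ``active assigned set'' along the forced play order has size about $n/2$. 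Hence your single window $W$ of size at most $k$ cannot in general contain all assigned-but-still-active variables; and if a variable is projected out of $W$ while some atom containing it still has unassigned variables, your invariant no longer certifies that this atom is satisfied by the final assignment --- property $(\alpha)$ only helps if all of an atom's variables are simultaneously in the window after all of them have been assigned. So the one place where the Q-width hypothesis must do real work is exactly the place left unproved, and the statement you would need there is not what elimination width provides.

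The paper avoids this by an indirect route: from the definition of elimination width, $\phi$ is rewritten into a logically equivalent sentence $\phi'$ of width at most $k$ using the three transformations of Lemma~\ref{lemma:two-transformations}; if $\relb \not\models \phi$ then $\relb \not\models \phi'$, and the completeness construction in the proof of Theorem~\ref{thm:sound-and-complete} yields a refutation of $(\phi',\relb)$ whose width equals the width of $\phi'$, hence at most $k$; finally Lemma~\ref{lemma:two-transformations} (itself proved via the constraint-system characterization, Theorem~\ref{thm:consistency-characterization}) transfers non-$k$-judge-consistency from $\phi'$ back to $\phi$. If you want to retain a strategy-style argument, the natural repair is to run it on $\phi'$ rather than on the prenex $\phi$ --- there the syntax forces every subformula to have at most $k$ free variables, so the windows are small by construction --- after transferring the $k$-constraint system from $\phi$ to $\phi'$ via Lemma~\ref{lemma:two-transformations}. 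As written, the direct construction on the prenex sentence does not go through.
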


This theorem, 
in conjunction 
with Theorem~\ref{thm:deciding-consistency-ptime},
immediately implies that for any set $\Phi$ of qc-sentences
having Q-width bounded by a constant $k$,
checking for $k$-judge-consistency 
is a uniform polynomial-time procedure
that decides any QCSP instance $(\phi, \relb)$
where $\phi \in \Phi$ and $\relb$ is finite.
Hence, in the setting of bounded arity, checking for $k$-judge-consistency
is a \emph{generic} reasoning procedure 
that correctly decides the tractable cases of QCSP
identified by the work on elimination width.

In order to establish this theorem, we first prove a lemma.

\begin{lemma}
\label{lemma:two-transformations}
Suppose that the QCSP instance $(\theta, \relb)$ is $k$-judge-consistent,
that $\relb$ is a finite structure,
and that $\theta'$ is a qc-sentence obtained from $\theta$
by applying one of the following three syntactic transformations
to a subformula of $\theta$:
\begin{enumerate}

\item $\bigwedge_{i \in I} \phi_i \leadsto 
(\bigwedge_{j \in J} \phi_j) \wedge (\bigwedge_{k \in K} \phi_k)$,
where $I$ is the disjoint union of $J$ and $K$

\item $Qv (\phi \wedge \psi) \leadsto (Qv \phi) \wedge \psi$
where $v \notin \free(\psi)$

\item $\forall y \bigwedge_{i \in I} \phi_i \leadsto
\bigwedge_{i \in I} (\forall y \phi_i)$

\end{enumerate}
Then, the QCSP instance $(\theta', \relb)$ is $k$-judge-consistent.
\end{lemma}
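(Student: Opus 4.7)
The plan is to invoke Theorem~\ref{thm:consistency-characterization}: it suffices to show that if $(\theta,\relb)$ admits a $k$-constraint system $P$, then $(\theta',\relb)$ admits one as well. For each of the three syntactic transformations, I will set up a natural correspondence between the indices of $\theta'$ and those of $\theta$, define a candidate $k$-constraint system $P'$ on $(\theta',\relb)$ in terms of $P$, and verify the four axioms. Axiom $(\alpha)$ is immediate because atoms and their indices are preserved, and axiom $(\pi)$ is inherited from $P$ because every value $P'[i',V]$ is defined to equal some $P[i,V]$.

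For the associativity transformation, the only new indices of $\theta'$ are the two intermediate conjunction nodes. Set $P'[i',V] := P[i,V]$, taking $i$ to be the index of the original combined conjunction if $i'$ is one of the new intermediate nodes, and $i=i'$ otherwise. Axiom $(\lambda)$ at each affected parent-child pair reduces, by applying $(\lambda)$ for $P$ between the original conjunction and its children, to a trivial equality; axiom $(\epsilon)$ is untouched since no quantifier is introduced or removed.

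For the scope-extension transformation, let $\ell$ denote the index of $Qv(\phi\wedge\psi)$ in $\theta$, with child $\ell'$ (of $\phi\wedge\psi$) having children $\ell_1,\ell_2$ labelling $\phi,\psi$; in $\theta'$ let $m$ denote the index of $(Qv\phi)\wedge\psi$, with children $m_1$ (of $Qv\phi$) and $m_2$ (of $\psi$), and let $m'_1$ be the child of $m_1$. Define $P'[m,V]:=P[\ell,V]$, $P'[m_1,V]:=P[\ell,V]$ (valid since $\free(\phi)\setminus\{v\}\subseteq\free(\ell)$ using $v\notin\free(\psi)$), $P'[m_2,V]:=P[\ell_2,V]$, and $P'[m'_1,V]:=P[\ell_1,V]$. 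The key checks are $(\lambda)$ at $(m,m_2)$, which follows by chaining $(\lambda)$ of $P$ at $(\ell,\ell')$ and $(\ell',\ell_2)$; and, when $Q=\forall$, axiom $(\epsilon)$ at $(m_1,m'_1)$, which follows from $P[\ell,V\setminus\{v\}]\subseteq\epsilon_v(P[\ell',V])=\epsilon_v(P[\ell_1,V])$ using $(\epsilon)$ and $(\lambda)$ of $P$.

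The distributivity transformation is the main obstacle, because the single universal-elimination axiom of $P$ at the original $\forall y$-node must justify $|I|$ separate such axioms in $P'$. Let $\ell$ be the index of $\forall y\bigwedge_{i\in I}\phi_i$ in $\theta$, with child $\ell'$ (of $\bigwedge_i\phi_i$) having children $\ell_i$ labelling $\phi_i$; in $\theta'$ let $m$ be the index of $\bigwedge_i(\forall y\phi_i)$, with children $m_i$ (of $\forall y\phi_i$) and grandchildren $m'_i$ (of $\phi_i$). Define $P'[m,V]:=P[\ell,V]$, $P'[m_i,V]:=P[\ell,V]$ (valid since $\free(\phi_i)\setminus\{y\}\subseteq\free(\ell)$), and $P'[m'_i,V]:=P[\ell_i,V]$. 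The decisive verification is $(\epsilon)$ at each pair $(m_i,m'_i)$: for $V\subseteq\free(\phi_i)$ with $y\in V$ and $|V|\leq k$, one needs $P[\ell,V\setminus\{y\}]\subseteq\epsilon_y(P[\ell_i,V])$, which follows by combining $(\epsilon)$ of $P$ at $(\ell,\ell')$, giving $P[\ell,V\setminus\{y\}]\subseteq\epsilon_y(P[\ell',V])$, with $(\lambda)$ of $P$ at $(\ell',\ell_i)$, giving $P[\ell',V]=P[\ell_i,V]$. The remaining $(\lambda)$ checks at $(m,m_i)$ and $(m_i,m'_i)$ reduce similarly by chaining the corresponding $(\lambda)$ axioms of $P$ along the path $\ell\to\ell'\to\ell_i$, completing the construction.
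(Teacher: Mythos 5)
Your proposal is correct and follows essentially the same route as the paper: reduce to Theorem~\ref{thm:consistency-characterization}, transfer the $k$-constraint system across the transformation by assigning to each newly created node the value of the nearby original node (the combined conjunction in case (1), the quantifier node in cases (2) and (3)), and verify the axioms, with the decisive check being property $(\epsilon)$ in case (3) via $P[\ell,V\setminus\{y\}]\subseteq\epsilon_y(P[\ell',V])=\epsilon_y(P[\ell_i,V])$ --- exactly the verification the paper highlights. No substantive differences.
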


\begin{proof}
By Theorem~\ref{thm:consistency-characterization},
it suffices to show that if $(\theta, \relb)$
has a $k$-constraint system $P$,
then $(\theta', \relb)$
does as well.
We consider each of the three cases.

Case (1):
We define 
a $k$-constraint system $P'$ for $(\theta', \relb)$
in the following way.
Relative to the transformation,
let $i$ denote the index of $\phi_i$
in both $\theta$ and $\theta'$;
let $c$ denote the index of 
$\bigwedge_{i \in I} \phi_i$ in $\theta$
and of 
$(\bigwedge_{j \in J} \phi_j) \wedge (\bigwedge_{k \in K} \phi_k)$
in $\theta'$;
let $a$ be the index of 
$\bigwedge_{j \in J} \phi_j$ in $\theta'$;
and let $b$ be the index of 
$\bigwedge_{k \in K} \phi_k$ in $\theta'$.
For each other subformula occurrence in $\theta'$, 
there is a corresponding subformula occurrence
in $\theta$; we will assume that these two corresponding
subformula occurrences share the same index.

We now describe how to define $P'$.
Whenever discussing $P'[d, V]$,
it will hold that
$d$ is an index of $\theta'$, and we assume that
$V \subseteq \free(\theta'(d))$ and $|V| \leq k$.
We define $P'[i, V]$ as $P[i, V]$.
We define $P'[c, V]$ as $P[c, V]$.
We define 
$P'[a, V]$ as $P[c, V]$,
and similarly we define
$P'[b, V]$ as $P[c, V]$.
For each other index $\ell$ of $\theta'$,
we define $P'[\ell, V]$ as $P[\ell, V]$.
It is straightforward to verify that $P'$
is a $k$-constraint system.

Case (2):
We proceed as in the previous case; 
we define a $k$-constraint system $P'$ for $(\theta', \relb)$.
Relative to the transformation, let $a$ denote the index of
$Qv (\phi \wedge \psi)$ in $\theta$;
let $b$ denote the index of the subformula
$Qv \phi$ in $\theta'$.
We define $P'[b, V]$ as $P[a, V]$.
For each other subformula occurrence of $\theta'$
with index $\ell$,
there exists a corresponding subformula occurrence
of $\theta$ which we assume to also have index $\ell$.
We define $P'[\ell, V]$ as $P[\ell, V]$.
It is straightforward to verify that $P'$
is a $k$-constraint system.

Case (3):
We proceed as in the previous cases;
we define a $k$-constraint system $P'$ for $(\theta', \relb)$.
Let $d$ denote the index of 
$\forall y \bigwedge_{i \in I} \phi_i$ in $\theta$,
and also the index of
$\bigwedge_{i \in I} (\forall y \phi_i)$ in $\theta'$.
Let $i$ denote the index of $\phi_i$ in both $\theta$ and $\theta'$.
Let $c$ denote the index of $\bigwedge_{i \in I} \phi_i$ in $\theta$,
and let $i'$ denote the index of $\forall y \phi_i$ in $\theta'$.
For each $V \subseteq \free(\forall y \phi_i)$ with $|V| \leq k$,
define $P'[i', V]$ to be $P[d, V]$.
Elsewhere, define $P'$ to be equal to $P$ (each other index of
$\theta'$
corresponds to an index of $\theta$).
It is straightforward to verify that $P'$ is a $k$-constraint system.
In the region of interest, the property $(\epsilon)$
can be verified as follows.
Suppose that $U \subseteq \free(\phi(i))$ has $|U| \leq k$ and $y \in
U$,
and that $V = U \setminus \{ y \}$.
Then $P[d, V] \subseteq \epsilon_y(P[c, U]) = \epsilon_y(P[i, U])$
since $P$ is a $k$-constraint system.
As $P'[i', V] = P[d, V]$ by our definition of $P'$,
it follows that $P'[i', V] \subseteq \epsilon_y(P[i, U])$.
\end{proof}

\begin{proof} (Theorem~\ref{thm:consistency-solves-qwidth})
Suppose that the instance $(\phi, \relb)$ is not $k$-judge-consistent.
Then, by definition, there exists a judgement proof for the instance
containing an empty judgement, implying that $\relb \not\models \phi$
by Theorem~\ref{thm:sound-and-complete}.

For the other direction, suppose that $\relb \not\models \phi$.
From the definition of elimination width (defined as \emph{width}
in~\cite{ChenDalmau12-decomposingquantified}),
it can straightforwardly be verified by induction on the number of
variables in $\phi$ that $\phi$ can be transformed to a
sentence $\phi'$ having width less than or equal to $k$, via
the three syntactic transformations
of Lemma~\ref{lemma:two-transformations}.
As these three syntactic transformations preserve logical equivalence,
we have $\relb \not\models \phi'$.
By Theorem~\ref{thm:sound-and-complete},
an empty judgement is derivable;
by the proof of this theorem, there is a judgement proof 
with the empty judgement whose width
is equal to the
width of $\phi'$.  Since the width of $\phi'$ is less than or equal to
$k$,
we thus obtain a judgement proof of the empty judgement 
having width less than or equal to $k$, so by definition,
$(\phi', \relb)$ is not $k$-judge-consistent.
By appeal to Lemma~\ref{lemma:two-transformations},
$(\phi, \relb)$ is not $k$-judge-consistent.
\end{proof}

\section*{Acknowledgements} 
The author thanks Moritz M\"{u}ller and Friedrich Slivovsky 
for useful comments.
This work was supported by the Spanish project
TIN2013-46181-C2-2-R, 
by the Basque project GIU12/26,
and by the Basque grant UFI11/45.

\newpage

\appendix

\section{Proof of Theorem~\ref{thm:relationship}}
\label{sect:thm:relationship}

The theorem follows directly from the following two theorems.

\begin{theorem}
\normalfont
Let $\psi$ be a QCBF instance and let $(\phi, \relb)$ be a QCSP translation
of $\psi$.
For each clause judgement proof of $\psi$
having length $s$ and width $w$,
there exists a constraint judgement proof of $(\phi, \relb)$
having length $\leq 2s$ and width $\leq w+1$
such that: each clause judgement $(i, \alpha)$ appearing in the
first proof has the entailment property that
there exists a constraint judgement in the second proof
of the form $(i, \vars(\alpha), F)$ 
such that
each $f \in F$ satisfies $\alpha$
(equivalently, the unique $g: \vars(\alpha) \to \{ 0, 1 \}$
that does not satisfy $\alpha$ is not in $F$).
\end{theorem}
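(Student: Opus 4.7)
The plan is to prove this by induction on the length $s$ of the clause judgement proof, building the constraint judgement proof in parallel: for each clause judgement $(i,\alpha)$ produced at some step, I append at most two constraint judgements, the last of which has the form $(i,\vars(\alpha),F)$ and maintains the invariant that every $f\in F$ satisfies $\alpha$. The bound $\leq 2s$ on length follows immediately, and the width bound follows because only one rule ever forces width to exceed $w$.

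The straightforward cases go as follows. For (clause), I apply (atom) at $i$; by the definition of QCSP translation, the set $R^{\relb}$ produced is exactly the set of assignments to $\vars(\alpha)$ satisfying the clause $\alpha$, so the invariant holds. For (upward flow) and (downward flow), I apply the identically-named constraint rules; the free-variable condition $\vars(\alpha)\subseteq\free(\phi(i))$ transfers directly since $\free(\psi(i))=\free(\phi(i))$ at every index $i$.

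The main obstacle, and the only place where width $w+1$ arises, is the (resolve) case. Suppose $\gamma$ is the resolvent of $\alpha,\beta$ on a variable $v$, with $L\in\alpha$ and its complement $M\in\beta$, and that I have constraint judgements $(i,\vars(\alpha),F_\alpha)$ and $(i,\vars(\beta),F_\beta)$ satisfying the invariant. I first apply (join) to get $(i,\vars(\alpha)\cup\vars(\beta),F_\alpha\Join F_\beta)$; since $\gamma$ contains no literal on $v$, we have $\vars(\alpha)\cup\vars(\beta)=\vars(\gamma)\cup\{v\}$, of size at most $w+1$. Then (projection) yields $(i,\vars(\gamma),(F_\alpha\Join F_\beta)\res\vars(\gamma))$, of width at most $w$. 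To verify the invariant, take $f$ in the projected set and lift it to some $f'$ in the join; then $f'\res\vars(\alpha)$ and $f'\res\vars(\beta)$ satisfy $\alpha$ and $\beta$ respectively. As $L,M$ are complementary, at least one of them is falsified by $f'$, so the corresponding clause is satisfied by some literal lying in $\alpha\setminus\{L\}$ or $\beta\setminus\{M\}$; this literal lies in $\gamma$ and involves only variables in $\vars(\gamma)$, so $f$ satisfies $\gamma$.

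Finally, for ($\forall$-removal), producing $(i,\alpha\setminus\{y,\overnot{y}\})$ from $(j,\alpha)$ where $\phi(i)=\forall y\,\phi(j)$, I split on whether $y\in\vars(\alpha)$. If not, the clause is unchanged and a single (upward flow) suffices, which is permitted since $\vars(\alpha)\subseteq\free(\phi(j))\setminus\{y\}=\free(\phi(i))$. If so, I apply ($\forall$-elimination) to obtain $(i,\vars(\alpha)\setminus\{y\},\epsilon_y F)$, whose variable set equals $\vars(\alpha\setminus\{y,\overnot{y}\})$ because $\alpha$ contains at most one literal on $y$. For the invariant, given $f\in\epsilon_y F$, both extensions $f[y\to 0]$ and $f[y\to 1]$ lie in $F$ and thus satisfy $\alpha$; choosing the value $b\in\{0,1\}$ that falsifies the unique $y$- or $\overnot{y}$-literal of $\alpha$ forces some literal of $\alpha\setminus\{y,\overnot{y}\}$ to be satisfied by $f[y\to b]$, and hence by $f$ itself. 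This completes the induction.
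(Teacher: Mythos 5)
Your proposal is correct and follows essentially the same route as the paper's own proof: induction on the length of the clause judgement proof, translating (clause) to (atom), (resolve) to (join) followed by (projection) on the resolution variable (the sole source of width $w+1$ and of the factor $2$ in the length), the flow rules to their namesakes, and ($\forall$-removal) to ($\forall$-elimination), with the same entailment invariant verified in the resolve case. Your explicit case split in ($\forall$-removal) according to whether $y$ occurs in $\alpha$ (using (upward flow) when it does not, since ($\forall$-elimination) requires $y \in V$) is a small refinement of a detail the paper passes over silently, and is welcome.
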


A direct consequence of this theorem is that
if the original clause judgement proof contains an empty clause,
then the produced constraint judgement proof contains an empty constraint.

\begin{proof}
We prove this by induction on $s$.
Given a clause judgement proof $P$ of length $s + 1$
we create a constraint judgement proof in the following way.
Apply induction to the clause judgement proof consisting
of the first $s$ judgements in $P$; this gives a constraint judgement
proof $P'$.
We then need to show how to augment $P'$.
We consider cases, depending on the rule used to derive the
last judgement of $P$.  We use the notation of
Definition~\ref{def:clause-judgement-proof}.

\begin{itemize}
\item In the case of (clause) deriving $(i, \alpha)$,
apply (atom) at location $i$.

\item In the case of (resolve) deriving $(i, \gamma)$ from
$(i, \alpha)$ and $(i, \beta)$, let $v$ be the variable underlying
the complementary literals that are eliminated from $\alpha$ and
$\beta$
to obtain $\gamma$.  
The rule (join) is applied to the constraint judgements corresponding
to $(i, \alpha)$ and $(i, \beta)$ to obtain a new
judgement,
and then (projection) is used to remove the variable $v$ from that new judgement.

\item In the case of (upward flow) or (downward flow), the same rule is
applied
to the corresponding constraint judgement.

\item In the case of ($\forall$-removal), the rule ($\forall$-elimination)
is applied to the corresponding constraint judgement.

\end{itemize}
In the case (resolve), two new constraint judgements are produced,
and in all other cases, one new constraint judgement is produced;
hence, the claim on the length is correct.
In the case (resolve), the width of the first constraint judgement
produced
is one more than the width of the corresponding clause judgement,
and the width of the second constraint judgement produced
is equal to the width of the corresponding clause judgement;
in all other cases, the new constraint judgement produced
has width equal to that of the corresponding clause judgement.
Hence, the claim on the width is correct.
In each case, it is straightforward to verify the claimed entailment property.

As an example, we verify the claimed entailment property
in the case of (resolve).
Suppose that (resolve) derives $(i, \gamma)$ from
$(i, \alpha)$ and $(i, \beta)$.
Let 
$g_{\alpha}: \vars(\alpha) \to \{ 0, 1 \}$,
$g_{\beta}: \vars(\beta) \to \{ 0, 1 \}$, and
$g_{\gamma}: \vars(\gamma) \to \{ 0, 1 \}$
be assignments not satisfying $\alpha$, $\beta$, and $\gamma$,
respectively.
Let $v$ be the variable such that 
$\vars(\gamma) = (\vars(\alpha) \cup \vars(\beta)) \setminus \{ v \}$.
We assume without loss of generality that
$g_{\alpha}(v) = 0$ and that $g_{\beta}(v) = 1$.
Let 
$(i, \vars(\alpha), F_{\alpha})$
and
$(i, \vars(\beta), F_{\beta})$
be the constraint judgements for $(i, \alpha)$ and $(i, \beta)$,
respectively; 
we have $g_{\alpha} \notin F_{\alpha}$
and $g_{\beta} \notin F_{\beta}$.
Consider the constraint judgement
$(i, \vars(\alpha) \cup \vars(\beta), F_{\alpha} \Join F_{\beta})$
obtained by applying (join) to these two constraint judgements.
By definition of the join $\Join$, 
neither $g_{\alpha}$ not $g_{\beta}$ 
has an extension defined on $\vars(\alpha) \cup \vars(\beta)$
that is contained in $F_{\alpha} \Join F_{\beta}$.
Next, consider the constraint judgement 
$(i, \vars(\gamma), (F_{\alpha} \Join F_{\beta}) \res \vars(\gamma))$
obtained from the previous one by applying (projection).
We claim that 
$g_{\gamma} \notin (F_{\alpha} \Join F_{\beta}) \res \vars(\gamma)$.
Suppose not, for a contradiction;
then there exists an extension $g'_{\gamma}$ of $g_{\gamma}$
which is contained in $F_{\alpha} \Join F_{\beta}$.
If $g'_{\gamma}(v) = 0$, then $g'_{\gamma}$
is an extension of $g_{\alpha}$,
but $g'_{\gamma} \in F_{\alpha} \Join F_{\beta}$
contradicts $g_{\alpha} \notin F_{\alpha}$;
analogously,
If $g'_{\gamma}(v) = 1$, then $g'_{\gamma}$
is an extension of $g_{\beta}$,
but $g'_{\gamma} \in F_{\alpha} \Join F_{\beta}$
contradicts $g_{\beta} \notin F_{\beta}$.
\end{proof}

\begin{theorem}
\normalfont
Let $\psi$ be a QCBF instance and let $(\phi, \relb)$ be a QCSP translation
of $\psi$.
For each constraint judgement proof of $(\phi, \relb)$ 
having length $s$ and width $w$,
there exists a clause judgement proof of $\psi$
of length $\leq s \cdot \max(w2^{w-1}, 1)$ and width $\leq w$
such that:
each constraint judgement $(i, V, F)$ 
appearing in the first proof
has the entailment property that,
for each mapping $g: V \to \{ 0, 1 \}$ with $g \notin F$,
there exists a clause judgement $(i, \alpha)$ 
with $\vars(\alpha) \subseteq V$
in the second proof 
where $\alpha$ is not satisfied by $g$.
\end{theorem}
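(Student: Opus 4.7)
The plan is to prove the theorem by induction on the length $s$ of the given constraint judgement proof $P$. The base case $s=0$ is vacuous. For the inductive step, let $P'$ be the prefix of $P$ of length $s-1$; by induction there is a clause judgement proof $Q'$ of length at most $(s-1)\cdot \max(w2^{w-1},1)$ and width at most $w$ satisfying the stated entailment property. We then extend $Q'$ by at most $\max(w2^{w-1},1)$ new clause judgements so that the final constraint judgement of $P$ is also covered, via a case analysis on the rule producing it.

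Five of the six cases are routine. For (atom), since $(\phi,\relb)$ is a QCSP translation, $\phi(i)=R(v_1,\dots,v_k)$ corresponds to a clause $\gamma=\psi(i)$ whose unique falsifying assignment is the unique $g\notin F$, so a single application of the (clause) rule suffices. For (join), no new clause judgements are needed: any $g\notin F_1\Join F_2$ has $g\res U_j\notin F_j$ for some $j$, and the IH already supplies a clause judgement at $i$ with variable set $\subseteq U_j\subseteq U_1\cup U_2$ that $g$ falsifies. For (upward flow) and (downward flow), we simply apply the analogous clause-rule to each clause judgement at the source index that witnesses the IH. For ($\forall$-elimination), for each $g\notin \epsilon_y F$ we fix $b\in\{0,1\}$ with $g[y\to b]\notin F$, take the IH-provided clause judgement $(j,\alpha)$ falsified by $g[y\to b]$, and apply ($\forall$-removal) to derive $(i,\alpha\setminus\{y,\overnot{y}\})$; if a literal on $y$ occurs in $\alpha$ it must be the one false under $g[y\to b]$ (and is thus removed), while the remaining literals lie on $\free(\phi(i))$ and remain false under $g$.

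The main obstacle is the (projection) case. Here $(i,U,F\res U)$ is derived from $(i,V,F)$ with $U\subseteq V$, and for each $g\colon U\to\{0,1\}$ with $g\notin F\res U$ we must exhibit a clause $\beta_g$ with $\vars(\beta_g)\subseteq U$ falsifying $g$. The strategy is iterated resolution over $W=V\setminus U$. Every extension $g'$ of $g$ to $V$ satisfies $g'\notin F$, so the IH gives clauses $\alpha_{g'}$ each falsified by the corresponding $g'$. Enumerating $W=\{w_1,\dots,w_m\}$, we eliminate the $w_i$ one at a time, maintaining at stage $k$ a family of clauses indexed by partial assignments on $\{w_{k+1},\dots,w_m\}$, each falsified by the corresponding extension of $g$. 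At each step, if both clauses indexed by assignments differing only on $w_{k+1}$ contain a literal on $w_{k+1}$, those literals must be complementary (since each literal must evaluate to false under its respective assignment), so (resolve) applies; if one clause already omits $w_{k+1}$, it is reused unchanged. A direct count bounds the total number of resolutions (summed over all $g\notin F\res U$ and all elimination stages) by at most $2^{|V|}\leq 2^w$.

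Finally, one verifies the two quantitative bounds. The width bound $\leq w$ is immediate, since no rule ever produces a clause whose variable set exceeds that of the underlying constraint judgement. For the length bound, each of the six cases adds at most $\max(w2^{w-1},1)$ new clause judgements per step: for $w\geq 2$ the worst case (projection) contributes at most $2^w\leq w2^{w-1}$, while the boundary cases $w\in\{0,1\}$ are checked by direct inspection (for $w=1$, projection introduces at most a single resolution producing the empty clause, matching the value $\max(w2^{w-1},1)=1$; flows and $\forall$-eliminations are similarly bounded). Summing over all $s$ inductive steps yields the claimed bound $\leq s\cdot\max(w2^{w-1},1)$.
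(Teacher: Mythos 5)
Your proposal is correct in its essentials and follows the paper's route: the same induction on the length $s$ of the constraint judgement proof, the same per-rule case analysis, and identical handling of (atom), (join) and the flow rules. The one genuine difference is (projection). The paper eliminates the projected variables one at a time by saturation, adding \emph{all} resolvents over the current variable of pairs of already-derived clause judgements at location $i$ whose variables lie in the target set, bounding the additions by $2^{w-1}$ per eliminated variable, hence $w2^{w-1}$ per step; you instead work per assignment $g \notin F \res U$, running a Davis--Putnam-style elimination along the tree of extensions of $g$, with total count at most $2^{|V|} \le 2^w$. Both are sound --- your remark that the two clauses being resolved cannot clash on any other variable (their falsifying assignments agree off the eliminated variable) is exactly what guarantees non-tautological resolvents --- and your per-assignment treatment of ($\forall$-elimination) even covers witnesses that omit $y$, a point the paper's wording glosses.

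One loose end: your claim that at width $1$ ``flows \dots are similarly bounded'' is not literally true for the construction you describe. A flowed judgement $(i,\{v\},\emptyset)$ may force you to flow two distinct unit witnesses ($v$ and $\overline{v}$), i.e.\ two new judgements against a per-step budget of $\max(w2^{w-1},1)=1$, and chaining several such flows makes the naive total exceed $s$. The paper's proof elides the same multiplicity (it counts one new judgement per flow step), and the repair is easy (resolve the two unit witnesses into the empty clause and flow that single clause), so this is a shared cosmetic defect rather than a flaw specific to your approach.
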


A direct consequence of this theorem is that
 if a constraint judgement proof of $(\phi, \relb)$ 
having length $s$ and width $w$
contains an empty
constraint,
it may be augmented by one constraint judgement to contain
an empty constraint of the form $(i, \emptyset, \emptyset)$,
and then the theorem yields that there is 
a clause judgement proof having an empty clause of
length $\leq (s+1) \cdot \max(w2^{w-1}, 1)$ and width $\leq w$.

\begin{proof}
We proceed as in the proof of the previous theorem.
We prove this by induction on $s$.
Given a constraint judgement proof $P$ of length $s+1$,
we create a clause judgement proof $P'$ by applying induction
to $P$ with the last constraint judgement removed;
we then explain how to augment the resulting clause judgement proof $P'$
so that the last constraint judgement of $P$ has a corresponding
clause
judgement with the properties given in the theorem statement.
We consider cases depending on the rule used to derive the last
constraint judgement of $P$; 
we use the notation of Definition~\ref{proof-system}.

\begin{itemize}

\item In the case of (atom) deriving $(i, V, F)$, apply
(clause) at location $i$.

\item In the case of (projection) deriving
$(i, U, F \res U)$ from $(i, V, F)$, we first explain how to obtain
the clause judgements in the case that $|V| = |U| + 1$.
Let $v$ be the variable such that $U \cup \{ v \} = V$.
For each clause judgement $(i, \alpha)$
with $\vars(\alpha) \subseteq U$ that can be obtained by 
resolving two clause judgements in $P'$ on the variable $v$,
include the clause judgement in the proof.
The maximum number of clause judgements that we can add in this
fashion
is the number of clauses on $(w-1)$ variables, that is, $2^{w-1}$.

In the general case where $U \subseteq V$, 
we may proceed by applying the described procedure
$|V| - |U|$ many times.  Since $|V| - |U| \leq w$, the total
number of clause judgements that will be added can be upper bounded
by $w 2^{w-1}$.

\item In the case of (join), no clause judgement needs to be added.
This is because of the following.
Suppose the constraint judgement
$(i, U_1 \cup U_2, F_1 \Join F_2)$ is obtained by applying
(join) to $(i, U_1, F_1)$ and $(i, U_2, F_2)$.
For each mapping $g: U_1 \cup U_2 \to \{ 0, 1 \}$
with $g \notin F_1 \Join F_2$,
it holds (by definition of $\Join$)
that either $g \res U_1 \notin F_1$ or $g \res U_2 \notin F_2$.

\item In the case of ($\forall$-elimination)
deriving $(i, V \setminus \{ y \}, \epsilon_y F)$
from $(j, V, F)$, take all clause judgements $(j, \alpha)$
where $y \in \vars(\alpha) \subseteq V$, and apply 
($\forall$-removal) to each of these clause judgements.

\item In the case of (upward flow) or (downward flow),
the same rule is applied to the corresponding clause judgement.

\end{itemize}
In each case, the clause judgements produced have width less than or
equal to $w$.
We now consider
the number of clause judgements produced in each case.
This number is $1$
in the cases (atom), (upward flow), and (downward flow),
and is $0$
in the case (join).
In the case of (projection), we argued that this number is less than
or equal to $w 2^{w-1}$.
In the case of ($\forall$-elimination), since this rule can only be
applied if
$w \geq 1$ and at most $2^{w-1}$ clauses are generated, we can
also bound this number by $w 2^{w-1}$.

In each case, it is straightforward to verify the claimed entailment property.
\end{proof}

\section{Proof of Theorem~\ref{thm:trace-proof}}
\label{sect:thm:trace-proof}
The theorem follows directly from the following two theorems.

\begin{theorem}
Let $\psi$ be a QCBF instance.
Given a tree-like clause judgement proof $P$ (viewed as a tree)
of an empty clause,
there exists a trace $E$ whose root has label 
$(\emptyset, e)$ and where 
the number of nodes in $E$ 
is equal to the number of non-flow judgements in $P$.
(Here, we use $e$ to denote the unique assignment from $\emptyset$
to $\{ 0, 1 \}$.)
Also, the translation from $P$ to $E$ is polynomial-time computable.
\end{theorem}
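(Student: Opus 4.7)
The plan is to give a recursive, structure-preserving translation from the tree-like proof $P$ into a trace $E$. For each clause judgement $(i,\alpha)$ appearing in $P$, let $q_v$ denote the unique location where $v$ is quantified in $\psi$, and define $S_\alpha := \{(q_v,v) : v \in \vars(\alpha)\}$ and $a_\alpha : \vars(\alpha) \to \{0,1\}$ to be the unique assignment that falsifies $\alpha$. The translation then proceeds according to the rule used to derive the judgement: a (clause) judgement yields a trace leaf labeled $(S_\alpha,a_\alpha)$ with associated index $i$; a (resolve) judgement yields a binary internal node labeled $(S_\gamma,a_\gamma)$ whose children are the traces of the two premises (corresponding to a (Q-branch) on the resolved variable); a ($\forall$-removal) judgement yields a unary internal node labeled $(S_\gamma,a_\gamma)$ whose child is the trace of the premise (corresponding to a ($\forall$-branch) on the eliminated variable); and flow judgements are transparent, in that the trace of a flow conclusion is identified with the trace of its premise and contributes no new trace node. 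Applied to the root empty clause judgement of $P$, this yields $E$, whose root carries label $(\emptyset,e)$ since $\vars(\emptyset) = \emptyset$.

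The main verification task is checking that each transition satisfies the trace conditions. Coherence of each $S_\alpha$ is immediate: since $\vars(\alpha) \subseteq \free(\psi(i))$, the locations $q_v$ all lie strictly above $i$ on the single ancestor chain of $i$, and each $v$, being quantified only at $q_v$, stays free at every intermediate node, giving the required total ``follows'' ordering. For (resolve) on a variable $v$, take $(i',u) := (q_v,v)$ and the partition $S_0 := S_\alpha \setminus \{(q_v,v)\}$, $S_1 := S_\beta \setminus \{(q_v,v)\}$; then $S_0 \cup S_1 = S_\gamma$ follows from $\vars(\gamma) = (\vars(\alpha) \cup \vars(\beta)) \setminus \{v\}$, coherence of $S_\gamma \cup \{(q_v,v)\}$ follows by the same ancestor-chain argument, and the matching assignment identity $(a_\gamma \res \vars(S_b))[v \to b] = a_\alpha$ (for $b=0$, say) resp.\ $a_\beta$ (for $b=1$) holds because $a_\gamma$ agrees with $a_\alpha$ and $a_\beta$ on all variables shared with $\gamma$. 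For ($\forall$-removal) from $(j,\alpha)$ with $\phi(i) = \forall y \phi(j)$ (in the canonical case $y \in \vars(\alpha)$), $(i,y)$ is a $\forall$-located variable following $S_\gamma$, and the required child label $(S_\gamma \cup \{(i,y)\}, a_\gamma[y \to b])$ coincides with $(S_\alpha,a_\alpha)$ for the unique $b$ falsifying the $y$-literal in $\alpha$. For leaves, $\psi(i) = \alpha$ is a clause falsified by $a_\alpha$ with $\vars(\psi(i)) = \vars(\alpha) = \vars(S_\alpha)$, and $i$ follows $S_\alpha$ by the same chain argument.

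The node-count claim follows immediately from the bijection between trace nodes and non-flow judgements of $P$, and the translation is polynomial-time computable since it performs a single traversal of the proof tree computing constant-sized labels. The main obstacle will be the careful bookkeeping in the (resolve) case, ensuring that the partition $S_0,S_1$ together with the two prescribed assignments are exactly the two children demanded by (Q-branch); the degenerate case of ($\forall$-removal) with $y \notin \vars(\alpha)$, where the child label would not strictly grow, can be assumed away, as such a step is equivalent to (upward flow) and can be absorbed as a preprocessing step on $P$ without affecting the correspondence.
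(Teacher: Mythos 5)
Your construction is essentially the paper's: a structural induction on the tree-like proof with the invariant that the trace node produced for a judgement $(i,\alpha)$ carries the located variables binding $\vars(\alpha)$ together with the unique assignment falsifying $\alpha$, where (clause) gives leaves, (resolve) gives a (Q-branch) node with the partition $S_0,S_1$ read off from $\vars(\alpha),\vars(\beta)$, ($\forall$-removal) gives a ($\forall$-branch) node, and flow judgements contribute no node; your absorption of the degenerate $y\notin\vars(\alpha)$ case carries the same (harmless) one-node slack as the paper's own treatment of that case.

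One point needs repair, though. You set $S_\alpha=\{(q_v,v): v\in\vars(\alpha)\}$ with $q_v$ ``the unique location where $v$ is quantified in $\psi$,'' and your coherence argument explicitly uses that $v$ is ``quantified only at $q_v$.'' Nothing in the definition of a QCBF instance forbids the same variable from being quantified at several locations (the paper's formalism allows rebinding, as in its running example where $x$ is requantified), so $q_v$ need not be well defined, and if the wrong binding occurrence is chosen the ``follows''/coherence claims and the leaf condition can fail. The paper's proof instead attaches to each $v\in\vars(\alpha)$ the located variable $(j,v)$ where $j$ is the \emph{first} location above the judgement's index $i$ at which $v$ is quantified; this makes the label location-dependent, so it must then verify (as the paper does) that (upward flow) and (downward flow) preserve the invariant --- which holds because $\vars(\alpha)$ is free at both endpoints of a flow step, hence no variable of $\alpha$ is quantified at either location and the lowest binding above the index is unchanged. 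With that location-relative definition (or after explicitly renaming bound variables apart, which you would need to state and justify as harmless), your argument goes through as written.
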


\begin{proof}
We prove the following result, which yields the theorem.
Suppose that $P$ is a tree-like clause judgement proof,
viewed as a tree; 
using $(i, \alpha)$ to denote the clause judgement at the root of $P$,
there exists a trace $E$ whose number of nodes
is equal to the number of non-flow judgements in $P$,
and whose root has label $(S, a)$, such that the following two
conditions hold:
\begin{enumerate}

\item For each $v \in \vars(\alpha)$,
the set $S$ contains the located variable $(j, v)$
where $j$ is the first location above $i$ where $v$ is quantified.

\item The assignment $a$ is the unique assignment on
  $\vars(\alpha)$
that falsifies $\alpha$.

\end{enumerate}
We prove this result by induction on the structure of $P$,
describing directly how to construct $E$.

We consider cases depending on how the clause judgement at the root of
$P$
was derived; we use the notation of
Definition~\ref{def:clause-judgement-proof}.

In the case of (clause), let $E$ consist of a single node
having label $(S, a)$, where $(S, a)$ is the unique pair satisfying
the two conditions.

In the case of (resolve), suppose that $(i, \gamma)$ is the clause
judgement
at the root of $P$ and that $(i, \gamma)$
is derived as a resolvent of $\alpha$ and $\beta$
via clause judgements $(i, \alpha)$ and $(i, \beta)$.
Suppose that $v \in \alpha$ and $\overnot{v} \in \beta$
are the complementary literals such that
$\gamma = (\alpha \setminus \{ v \}) \cup (\beta \setminus \{
\overnot{v} \})$.
Take the trace whose root has label $(U, c)$
where $U$ is the union of $S$ and $T$ but without the located variable
containing $v$, and where $c$
is the unique assignment on $\vars(U) = \vars(\gamma)$
that falsifies $\gamma$.
Since $i$ follows $S$ and $i$ follows $T$,
we have that $i$ follows $U$,
and we have that
$(S, a)$ and $(T, b)$ could be generated from $(U, c)$
via a (Q-branch) step.

In the case of ($\forall$-removal), 
suppose that the clause judgement at the root of $P$
has the form $(i, \alpha \setminus \{ y, \overnot{y} \})$
and is derived from $(j, \alpha)$
where $\phi(i) = \forall y \phi(j)$.
If $\alpha \cap \{ y, \overnot{y} \} = \emptyset$, 
then the trace $E$ can be taken to be the trace given by induction.
Otherwise, take the trace for $(j, \alpha)$ given by induction, 
and let $(T, a)$ denote its root node label.
Set $S$ to be $T$, but with the located variable for $y$ removed.
We have that $(T, a)$ could be derived from
$(S, a \res \vars(S))$ by a ($\forall$-branch) step;
hence, we may take the trace obtained from the trace for $(j, \alpha)$
by adding on the top a new root node with label
$(S, a \res \vars(S))$.

In the case of (upward flow) or (downward flow),
we simply take the trace given by induction.
This preserves condition (1): 
if $i$ is the parent of $j$ in $\phi$
and $(i, \alpha)$ and $(j, \alpha)$ are clause judgements in $P$,
then $\vars(\alpha) \subseteq \free(\psi(i)) \cap \free(\psi(j))$
and so no variable in $\vars(\alpha)$ is 
quantified at location $i$ (nor $j$).
\end{proof}

\begin{theorem}
Let $\psi$ be a QCBF instance.
Given a trace $T$ with root node label $(\emptyset, e)$,
there exists a tree-like clause judgement proof $P$
of an empty clause 
where the number of non-flow nodes in $P$ (viewed as a tree)
is equal to the number of nodes in $T$.
Also, the translation from $T$ to $P$ is polynomial-time computable.
\end{theorem}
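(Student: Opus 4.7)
The plan is to prove by structural induction on $T$ the following strengthened claim: for every subtree $T'$ of $T$ rooted at a node labelled $(S, a)$, and for every location $c$ that follows $S$, there is a tree-like clause judgement proof deriving $(c, \alpha)$ with $\vars(\alpha) = \vars(S)$, $\alpha$ falsified by $a$, and the number of non-flow judgements equal to the number of nodes in $T'$. The flow rules add nothing to the non-flow count, and the set of indices following $S$ is connected in the sense needed (every variable of $\vars(S)$ stays free along the relevant path between such indices), so a clause of variable set $\vars(S)$ can be ferried between any two valid targets for free.

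For the base case, the leaf has an associated index $i$ whose clause $\psi(i)$ satisfies $\vars(\psi(i)) = \vars(S)$ and is falsified by $a$; a single (clause) derives $(i, \psi(i))$, and flows reposition to $c$. For the ($\forall$-branch) case, the unique child is labelled $(S \cup \{(i, y)\}, a[y \to b])$; since the trace step requires $(i, y)$ to follow $S$, the index $j$ with $\psi(i) = \forall y \psi(j)$ follows $S \cup \{(i, y)\}$, so induction with target $j$ yields $(j, \alpha')$ of variable set $\vars(S) \cup \{y\}$. One application of ($\forall$-removal) produces $(i, \alpha' \setminus \{y, \overnot{y}\})$, which by the invariant has variable set exactly $\vars(S)$ and is falsified by $a$. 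For the (Q-branch) case, invoke induction on each child with a common target $c^*$ following $S \cup \{(i, u)\}$; from the invariant $\vars(\alpha_\ell) = \vars(S_\ell) \cup \{u\}$ and the falsifying assignments $(a \res \vars(S_\ell))[u \to \ell]$, the positive literal $u$ appears in $\alpha_0$ and the complementary literal $\overnot{u}$ appears in $\alpha_1$, so (resolve) on $u$ is applicable and yields a clause on $\vars(S_0) \cup \vars(S_1) = \vars(S)$ falsified by $a$.

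The main obstacle, and the reason the invariant is stated as the equality $\vars(\alpha) = \vars(S)$ rather than mere inclusion, is to guarantee that (resolve) actually applies in the Q-branch case: if some child clause lacked $u$, there would be no complementary literals to resolve on, and we would be unable to expend exactly one non-flow step per trace node. This equality is seeded by the trace's leaf condition $\vars(\psi(i)) = \vars(S)$ and preserved by the inductive steps sketched above. The subordinate location bookkeeping is routine: coherence of the relevant coherent sets supplies valid common targets for pairs of inductive calls, and the connectedness of the ``follows $S$'' region makes free repositioning by the flow rules unproblematic.

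Applied to the root of $T$ (where $S = \emptyset$ and any location of $\psi$ is a valid target), the claim delivers a tree-like proof of an empty clause with non-flow count exactly $|T|$. Polynomial-time computability is immediate from the structure of the construction, which emits, per trace node, a single non-flow rule application along with $O(\mathrm{depth}(\psi))$ flow applications for repositioning.
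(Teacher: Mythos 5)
Your induction mirrors the paper's: the same invariant ($\vars(\alpha)=\vars(S)$, $\alpha$ falsified by $a$, location following $S$, one non-flow judgement per trace node) and the same case analysis (leaf $\mapsto$ (clause), $\forall$-branch $\mapsto$ ($\forall$-removal), Q-branch $\mapsto$ (resolve)). The difference is that you strengthen the inductive statement to ``for \emph{every} location $c$ following $S$,'' and this strengthening creates two proof obligations that you discharge with justifications that do not hold. First, in the $\forall$-branch case you assert that since $(i,y)$ follows $S$, the index $j$ with $\psi(i)=\forall y\,\psi(j)$ follows $S\cup\{(i,y)\}$. That inference is invalid: ``$(i,y)$ follows $S$'' constrains only the position of $i$ relative to the elements of $S$; it says nothing about $y$ being free below $i$. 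If the quantification of $y$ at $i$ is vacuous ($y\notin\free(\psi(j))$), then \emph{no} index follows $(i,y)$, and your recursive call has no legal target. Second, in the Q-branch case you claim that coherence of $S\cup\{(i,u)\}$ supplies a common target $c^*$ following the whole set. Coherence alone does not: a coherent set whose deepest located variable is vacuously quantified has no common follower (coherence gives freeness of the other variables only down to that deepest quantification point, and nothing for the deepest variable itself below its own quantifier).

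Both facts you need are in fact true at every node of a \emph{complete} trace, but establishing them requires a global argument you do not give: located variables are never dropped except by the Q-branch split, which keeps each element in at least one child, so every located variable in a label set persists along some path to a leaf, and the leaf's associated index follows the entire label set there — in particular every located variable occurring in a trace is followed by some index, which yields the non-vacuity and the common follower (the child of the deepest quantification point then works). Alternatively, you can simply weaken your hypothesis to the paper's existential form: the induction itself \emph{produces} a location $k$ following the (augmented) label set, seeded at the leaves by the trace's leaf condition, and the freeness facts needed to reposition clauses by (upward flow)/(downward flow) — e.g., to the child of the lowest location of $S\cup\{(i,u)\}$ before resolving — are then extracted from ``$k$ follows the label set'' together with coherence. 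As written, your proof has a genuine gap at exactly these two existence/followability steps; the rest of the construction (the resolve step on $u$, the width/assignment bookkeeping, the one-non-flow-judgement-per-node count, and polynomial-time computability) is fine and matches the paper.
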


\begin{proof}
We prove the following result which implies the theorem: 
for any trace $T$ with root node label
$(S, a)$,
there exists a tree-like clause judgement proof $P$ 
ending in $(i, \alpha)$
where the number of nodes in $P$ and $T$ are related as in the theorem
statement,
and such that the following two conditions hold:
\begin{enumerate}

\item $i$ follows $S$.

\item $\vars(S) = \vars(\alpha)$ and $a$ is the unique assignment
on $\vars(\alpha)$ that falsifies $\alpha$.

\end{enumerate}
We prove the result by induction; 
we consider cases depending on the type of the root node of $T$,
that is, depending on how many children the root node of $T$ has.

If the root node of $T$ is a leaf, the result is clear from the
definition of trace.

If the root node of $T$ has one child,
let $(S \cup \{ (j, y) \}, a[y \to b])$
be the label of the child of the root node.
By induction, 
there exists a tree-like clause judgement proof ending with
$(k, \beta)$
where $k$ follows $S \cup \{ (j, y) \}$
and $a[y \to b]$ is the unique assignment on $\vars(\beta)$
that falsifies $\beta$.
Since $(j, y)$ follows $S$ and $k$ follows $(j, y)$,
by applying (upward flow), we may obtain a tree-like clause judgement
proof ending with $(c, \beta)$ where $c$ is the child of $j$.
Then, as $\psi(j) = \forall y \psi(c)$,
we can apply ($\forall$-removal) to $(c, \beta)$
to obtain the desired clause judgement proof.

If the root node of $T$ has two children,
let
$(S_0 \cup \{(j,u)\}, (a \res \vars(S_0))[u \to a_0])$
and
$(S_1 \cup \{(j,u)\}, (a \res \vars(S_1))[u \to a_1])$
be the labels of the children;
we have $a_0, a_1 \in \{ 0, 1 \}$ and $a_0 \neq a_1$.
By induction, we have tree-like clause judgement proofs ending with
$(k_0, \beta_0)$ and $(k_1, \beta_1)$
where 
$k_0$ follows $S_0 \cup \{ (j, u) \}$ and
$(a \res \vars(S_0))[u \to a_0]$
is the unique assignment on $\vars(\beta_0)$ that falsifies
$\beta_0$;
and similarly,
$k_1$ follows $S_1 \cup \{ (j, u) \}$ and
$(a \res \vars(S_1))[u \to a_1]$
is the unique assignment on $\vars(\beta_1)$ that falsifies
$\beta_1$.
By applying (upward flow), we obtain tree-like clause judgement proofs
ending with $(\ell_0, \beta_0)$ and $(\ell_1, \beta_1)$
where $\ell_0$ is the child of the lowest location in 
$S_0 \cup \{ (j,u) \}$,
and $\ell_1$ is the child of the lowest location in 
$S_1 \cup \{ (j,u) \}$.
Let $m$ be the child of the lowest location in $S \cup \{ (j,u) \}$.
At least one of $\ell_0$, $\ell_1$ is equal to $m$
(since $S_0 \cup S_1 = S$).
If one of $\ell_0$, $\ell_1$ is not equal to $m$,
say $\ell_b$,
we may apply (downward flow) to obtain a clause judgement proof
$(m, \beta_b)$; this is because 
$\vars(\beta_b)$ is free in every location between 
$m$ and $\ell_b$ (inclusive), as $S \cup \{ (j, u) \}$ is coherent.
We hence obtain clause judgement proofs for
$(m, \beta_0)$ and for $(m, \beta_1)$.
Apply (resolve) to these to obtain the desired clause judgement proof.
\end{proof}

\section{Proof of Theorem~\ref{thm:consistency-characterization}}
\label{sect:thm:consistency-characterization}

Theorem~\ref{thm:consistency-characterization}
follows immediately from the two lemmas presented in this section.

\begin{lemma}
Let $(\phi, \relb)$ be a QCSP instance.
If there exists a $k$-constraint system $P$
for the instance, then the instance is $k$-judge-consistent.
\end{lemma}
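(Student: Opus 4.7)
The plan is to prove, by induction on the position of a judgement in a width-$\leq k$ judgement proof, the invariant
\[
P[i, V] \subseteq F \quad \text{for every judgement } (i, V, F) \text{ in the proof.}
\]
Since $P[i, V]$ is required to be non-empty by the definition of $k$-constraint system, this invariant immediately rules out any empty judgement $(i, V, \emptyset)$ of width $|V| \leq k$, which gives $k$-judge-consistency.

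The inductive argument proceeds by case analysis on the rule that introduces a given judgement, and each rule is matched in a one-to-one fashion with a property of the $k$-constraint system in Definition~\ref{def:constraint-system}. For (atom), property $(\alpha)$ says exactly that $P[i, \{v_1,\ldots,v_m\}]$ is contained in the set of satisfying assignments of the atom. For (projection) from $(i, V, F)$ to $(i, U, F \res U)$, property $(\pi)$ gives $P[i, U] = P[i, V] \res U \subseteq F \res U$ by monotonicity of restriction. For (upward flow) and (downward flow) across a parent/child pair $(i, j)$, the hypothesis that $V$ lies in $\free(\phi(i)) \cap \free(\phi(j))$ together with $(\lambda)$ gives $P[i, V] = P[j, V]$, so the containment propagates. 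For ($\forall$-elimination), $(\epsilon)$ yields $P[i, V \setminus \{y\}] \subseteq \epsilon_y(P[j, V])$, and $\epsilon_y$ is monotone in its argument, so together with the inductive hypothesis $P[j, V] \subseteq F$ we obtain $P[i, V \setminus \{y\}] \subseteq \epsilon_y(F)$.

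The (join) rule is the one that requires the most care, and is likely to be the main obstacle, though it is not a deep one. From $(i, U_1, F_1)$ and $(i, U_2, F_2)$ with $P[i, U_1] \subseteq F_1$ and $P[i, U_2] \subseteq F_2$ (inductive hypothesis), one wants $P[i, U_1 \cup U_2] \subseteq F_1 \Join F_2$. The key point is that the width-$\leq k$ assumption forces $|U_1 \cup U_2| \leq k$, so $P[i, U_1 \cup U_2]$ is defined; then property $(\pi)$ applied twice gives $P[i, U_1 \cup U_2] \res U_1 = P[i, U_1]$ and $P[i, U_1 \cup U_2] \res U_2 = P[i, U_2]$. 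Hence any $f \in P[i, U_1 \cup U_2]$ satisfies $f \res U_1 \in P[i, U_1] \subseteq F_1$ and $f \res U_2 \in P[i, U_2] \subseteq F_2$, which by definition of the join means $f \in F_1 \Join F_2$.

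Once all six rules are handled, the invariant is established. Applied to a hypothetical empty judgement $(i, V, \emptyset)$ in a width-$\leq k$ proof, it would force $P[i, V] \subseteq \emptyset$, contradicting the non-emptiness clause in the definition of $k$-constraint system. Therefore no such empty judgement is derivable, and $(\phi, \relb)$ is $k$-judge-consistent.
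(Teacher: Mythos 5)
Your proposal is correct and follows essentially the same route as the paper: the same invariant $P[i,V] \subseteq F$ established by induction over the width-$\leq k$ proof, with each rule matched to the corresponding property $(\alpha)$, $(\pi)$, $(\lambda)$, $(\epsilon)$, and the join case handled via two applications of $(\pi)$ exactly as in the paper's argument. The concluding appeal to non-emptiness of $P[i,V]$ to rule out empty judgements is also the intended finish.
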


\begin{proof}
We show, by induction on the proof structure, 
that if $(i, V, F)$ with $|V| \leq k$
is a derivable judgement,
then $P[i, V] \subseteq F$.
We consider cases based on which rule was used to derive
$(i, V, F)$.

In the case of (atom), we have $P[i, V] \subseteq F$ by 
property $(\alpha)$.

In the case of (projection), 
we suppose that 
$(i, V, F)$
 is a previous judgement
with $P[i, V] \subseteq F$,
and that the judgement of interest has the form
$(i, U, F \res U)$,
where $U \subseteq V$.
We have $P[i, U] = (P[i, V] \res U) \subseteq (F \res U)$,
where the equality holds by property $(\pi)$.

In the case of (join),
we suppose that 
$(i, U_1, F_1)$ and $(i, U_2, F_2)$
are previous judgements
with 
$P[i, U_1] \subseteq F_1$
and
$P[i, U_2] \subseteq F_2$,
and that the judgement of interest is
$(i, U_1 \cup U_2, F_1 \Join F_2)$.
By property $(\pi)$, we have that
$P[i, U_1 \cup U_2] \res U_1 = P[i, U_1]$ 
and 
$P[i, U_1 \cup U_2] \res U_2 = P[i, U_2]$.
It follows that 
$P[i, U_1 \cup U_2] \subseteq P[i, U_1] \Join P[i, U_2]
\subseteq F_1 \Join F_2$.

The cases of (upward flow) and (downward flow) follow
immediately from property $(\lambda)$.

In the case of ($\forall$-elimination),
we suppose that
$(j, V, F)$ is a previous judgement
with $P[j, V] \subseteq F$,
and that the judgement of interest is
$(i, V \setminus \{ y \}, \epsilon_y F)$
where $i$ is the parent of $j$,
and $\phi(i) = \forall y \phi(j)$.
We have $P[i, V \setminus \{ y \}] \subseteq \epsilon_y( P[j, V] )
\subseteq \epsilon_y(F)$, where the first containment holds by
property
$(\epsilon)$.
\end{proof}

\begin{definition}
Let $k \geq 1$.
A structure $\relb$ is \emph{$k$-behaved} if for each $i$ with $1 \leq
i \leq k$, there are finitely many relations of arity $i$ that are
qc-definable over $\relb$.
\end{definition}

\begin{lemma}
Let $k \geq 1$.
Let $(\phi, \relb)$ be a QCSP instance
where $\relb$ is $k$-behaved.
If the instance is $k$-judge-consistent,
then then there exists a $k$-constraint system $P$ for the instance.
\end{lemma}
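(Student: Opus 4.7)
The plan is to define, for each index $i \in I_\phi$ and each $V \subseteq \free(\phi(i))$ with $|V| \leq k$,
\[
P[i, V] = \bigcap \{ F : (i, V, F) \text{ is a derivable judgement in some proof of width } \leq k \},
\]
and to verify that $P$ is a $k$-constraint system: non-empty-valued and satisfying axioms $(\alpha)$--$(\epsilon)$.

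The key step is non-emptiness of each $P[i, V]$. By Lemma~\ref{lemma:judgement-gives-formula}, each derivable $(i, V, F)$ in a width-$\leq k$ proof corresponds to a qc-formula $\psi$ with $\free(\psi) = V$, $\width(\psi) \leq k$, and extension $F$ over $\relb$; so each such $F$ is a qc-definable relation of arity $|V| \leq k$. The $k$-behavedness of $\relb$ then forces the collection of possible $F$'s to be finite; enumerate them as $F_1, \ldots, F_m$. Concatenating the respective proofs and iteratively applying (join)---which coincides with intersection when both judgements share the same variable set, keeping the width at $|V| \leq k$---produces a single width-$\leq k$ proof containing $(i, V, F_1 \cap \cdots \cap F_m) = (i, V, P[i, V])$. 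Were $P[i, V]$ empty, this would exhibit an empty judgement in a width-$\leq k$ proof, contradicting $k$-judge-consistency.

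The four axioms are then obtained by a direct reading of the inference rules. For $(\alpha)$, the (atom) rule exhibits the canonical relation $F_R$ as a member of the intersection, so $P[i, V] \subseteq F_R$. For $(\lambda)$, (upward flow) and (downward flow) show that $(i, V, F)$ and $(j, V, F)$ are derivable together whenever both are valid judgements, so the intersections coincide. For $(\epsilon)$, applying ($\forall$-elimination) to the derivable $(j, U, P[j, U])$ produces the derivable $(i, V, \epsilon_y P[j, U])$, whence $P[i, V] \subseteq \epsilon_y P[j, U]$. For $(\pi)$, one inclusion uses (projection) applied to $(i, V, P[i, V])$, giving $P[i, U] \subseteq P[i, V] \res U$. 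For the reverse, given any derivable $(i, U, F_U)$ in a width-$\leq k$ proof, the join $(i, V, F_U \Join P[i, V])$ is simultaneously contained in $P[i, V]$ and a member of the intersection defining $P[i, V]$, hence equal to $P[i, V]$; this forces every $f \in P[i, V]$ to satisfy $f \res U \in F_U$, so $P[i, V] \res U \subseteq F_U$, and intersecting over all such $F_U$ completes the argument.

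The main obstacle is precisely the non-emptiness step. The $k$-behavedness hypothesis is essential because it collapses the a priori infinite family of derivable $(i, V, F)$'s into finitely many distinct relations; this in turn lets us exhibit the intersection as a single derivable judgement of width $\leq k$, against which $k$-judge-consistency can be deployed. Once this finiteness is in hand, the verification of the four axioms amounts to routine matching of each axiom against the corresponding inference rule.
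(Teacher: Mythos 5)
Your proof is correct and takes essentially the same route as the paper: the paper defines $P[i,V]$ as the unique minimal $k$-derivable judgement at $(i,V)$ (existence from Lemma~\ref{lemma:judgement-gives-formula} together with $k$-behavedness, uniqueness via the same join-equals-intersection trick you use), which is exactly your intersection of all $k$-derivable constraints, and it then verifies the four axioms by matching them to the same rules you invoke. The only difference is cosmetic: you exhibit the intersection explicitly as a single width-$\leq k$ derivable judgement by iterated joins, while the paper reaches the same object through the minimality/uniqueness argument.
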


\begin{proof}
Relative to a QCSP instance, we say that
a judgement is \emph{$k$-derivable}
if there exists a judgement proof 
of width less than or equal to $k$
that contains the judgement.

Let $I$ be an index set for $\phi$.
Let us say that a $k$-derivable judgement 
$(i, V, F)$
is \emph{minimal} if for all sets $G$ such that
the judgement $(i, V, G)$
is $k$-derivable, it holds that
$G \subseteq F$ implies $G = F$.
We claim that,
when $i \in I$ and $V \subseteq \free(\phi(i))$ with $|V| \leq k$,
there is a unique minimal $k$-derivable judgement
$(i, V, F)$.
The existence of a minimal $k$-derivable judgement
follows from 
Lemma~\ref{lemma:judgement-gives-formula}
and
the $k$-behavedness of $\relb$.
To establish uniqueness, suppose for a contradiction that
$(i, V, F_1)$ and $(i, V, F_2)$
are both minimal $k$-derivable judgements and $F_1 \neq F_2$.
By the definition of minimal, we have
$F_1 \not\subseteq F_2$ and
$F_2 \not\subseteq F_1$,
so
$F_1 \cup F_2 \not\subseteq F_1$
and
$F_1 \cup F_2 \not\subseteq F_2$.
By the (join) rule,
the judgement 
$(i, V, F_1 \Join F_2)$
is $k$-derivable; since here $F_1 \Join F_2 = F_1 \cap F_2$, we obtain a
contradiction.

For all $i \in I$ and $V \subseteq \free(\phi(i))$,
we define $P[i, V]$ so that
$(i, V, P[i, V])$ is the 
unique minimal $k$-derivable judgement
involving $i$ and $V$.
We confirm that $P$ is a $k$-constraint system 
by verifying that it satisfies each of the four properties
of the definition of $k$-constraint system.
In discussing each of the properties,
we use the notation of Definition~\ref{def:constraint-system}.

Property $(\alpha)$ follows immediately from the (atom) rule.

For property $(\pi)$, suppose that $U \subseteq V$.
We have that
$(i, U, P[i, U])$ and
$(i, V, P[i, V])$ 
are $k$-derivable.
It follows that $(i, V, F_V)$
and
$(i, U, F_U)$
are $k$-derivable,
where $F_V = P[i, U] \Join P[i, V]$
and $F_U = F_V \res U$.
We have $F_V \subseteq P[i, V]$
and $F_U \subseteq P[i, U]$;
it follows, by definition of $P$,
that $F_V = P[i, V]$ and $F_U = P[i, U]$.
Since $F_U = F_V \res U$, we have $P[i, U] = P[i, V] \res U$.

For property $(\lambda)$,
suppose that $j$ is a child of $i$ with $V \subseteq \free(\phi(j))$.
We have that
$(i, V, P[i, V])$ 
and
$(j, V, P[j, V])$ 
are $k$-derivable.
By the (downward flow) and (upward flow) rules, we obtain
that
$(i, V, P[j, V])$ 
and
$(j, V, P[i, V])$
are $k$-derivable.
By definition of $P$, we obtain that
$P[i, V] \subseteq P[j, V]$
and
$P[j, V] \subseteq P[i, V]$
and hence $P[i, V] = P[j, V]$.

For property $(\epsilon)$,
suppose that
$j$ is a child of $i$,
$\phi(i) = \forall y \phi(j)$, 
$U$ is a subset of $\free(\phi(j))$ with $|U| \leq k$ and $y \in U$,
and $V = U \setminus \{ y \}$.
That
$P[i, V] \subseteq \epsilon_y( P[j, U] )$
follows immediately from applying the ($\forall$-elimination)
rule to the $k$-derivable judgement
$(j, U, P[j, U])$.
\end{proof}

\newpage

\bibliographystyle{abbrv}

\bibliography{../../hubiebib}

\end{document}